\documentclass[submission,copyright,creativecommons, noderivs, noncommercial]{eptcs}
\providecommand{\event}{GASCom 2026} 

  \usepackage[T1]{fontenc}        

\usepackage{newunicodechar}
\newunicodechar{’}{'}

\usepackage{enumitem}

\usepackage{booktabs} 
\usepackage{adjustbox} 
\usepackage{cancel}
\usepackage{epigraph}
\usepackage{subcaption}
\setlist[enumerate, 1]{label=\arabic*.  }

\usepackage[table, dvipsnames]{xcolor} 
\usepackage{graphicx}
\usepackage{amsmath}
\usepackage{amssymb}
\usepackage{amsthm}
\usepackage{epstopdf}
\RequirePackage{doi}
\usepackage{hyperref}
\usepackage{forest}
\usepackage{tikz}
\usepackage{tikz-cd}
\usepackage{stmaryrd}
\usepackage{comment} 
\usepackage{nicefrac}
\usepackage{pgfplots}
\pgfplotsset{compat=1.16}

\definecolor{USred}{cmyk}{0,1.00,0.65,0.34}
\definecolor{darkblue}{rgb}{0.0627, 0.1539, 0.4510}
\definecolor{darkgreen}{rgb}{0.0627,0.4510,0.1539}

\definecolor{darkorange}{cmyk}{0,0.55,1,0.20}

\renewcommand{\emph}[1]{{\textcolor{USred}{\em #1}}}

\hypersetup{
  colorlinks,
  citecolor=USred,
  linkcolor=USred,
  urlcolor=USred}

\AtBeginDocument{

\def\SS{\mathbb{S}}

\DeclareMathOperator{\blob}{RecordCode}

\DeclareMathOperator{\rec}{rec}

\DeclareMathOperator{\mrec}{mrec}

\newcommand{\planted}{R_{\bullet}}

\setcounter{MaxMatrixCols}{20}
\setlength{\epigraphwidth}{0.48\textwidth}

\theoremstyle{definition}
\newtheorem{de}{Definition}
\theoremstyle{plain}
\newtheorem{thm}[de]{Theorem}

\newtheorem{lem}[de]{Lemma}

\newtheorem{p}[de]{Proposition}
\newtheorem{cor}[de]{Corollary}
\newtheorem{ex}[de]{Example}

\setlength{\epigraphwidth}{0.5\textwidth}
\theoremstyle{definition}

}

\title{The Genesis Sequence, Tree Records and Endofunctions.}
\author{Enrica Duchi
\institute{Université de Paris Cité, IRIF, Paris, France}
\and
Adrián Lillo
\institute{Universidad de Sevila, Sevilla, España}
\and
Pablo Puerto
\institute{Universidad de Sevila, Sevilla, España}
\and
Mercedes Rosas
\institute{Universidad de Sevila, Sevilla, España}
\and
Stefan Trandafir
\institute{Simon Fraser University, Vancouver, Canada}
}
\def\titlerunning{The Genesis Sequence, Tree Records and Endofunctions.}
\def\authorrunning{E. Duchi, A. Lillo, P. Puerto, M. Rosas \& S. Trandafir}

\begin{document}

\maketitle
\begin{abstract}
We present bijections connecting tree records, the girth of a connected endofunction, and the genesis sequence (the first sequence in OEIS). Using these, we derive generating functions for tree and forest record numbers in terms of Cayley’s tree function and give a new proof of Cayley’s forest formula.
\end{abstract}

\epigraph{\em
        This work is dedicated to Neil J. A. Sloane, renowned number-crusher, for his creation and tireless curation of the On-Line Encyclopedia of Integer Sequences.
}

\tikzset{
    record/.style={
        circle,
        draw=black,
        fill=black,
        inner sep=1.5pt,
        label={#1},
        solid
    },
    record/.default={},
    non-record/.style={
        circle,
        draw = black,
        fill = white,
        inner sep=1.5pt,
        solid
    },
    every label/.style={
            font=\tiny
        }
}


The present work is an extended abstract of \cite{LRT-bijective-visit}, building on earlier studies of records in rooted trees and forests \cite{LRT-GF-records, LRT-Weary}. Trees and forests are fundamental structures in discrete mathematics, and our work highlights an intriguing link between tree records, node heights, and endofunctions first suggested by the genesis sequence, the inaugural entry of the Online Encyclopedia of Integer Sequences. The observation that this sequence counts all records in rooted trees with $n$ nodes motivated combinatorial proofs of several identities relating tree records to the Cayley tree function. Space constraints prevent us from including results on the record code and a new proof of Cayley’s forest formula, these appear in \cite{LRT-bijective-visit}.

In the Numberphile podcast episode “The Number Collector” \cite{podcast_Sloan}, Neil Sloane recounts how he encountered the genesis sequence of the OEIS, the sequence  ``Normalized total height of all nodes in all rooted trees with $n$ labeled nodes",  during his PhD research, and  how it inspired the creation of the OEIS.
A short version of this story appears at the encyclopedia's wiki that we quote:
``The sequence database was begun by Neil J. A. Sloane  in early 1964 when he was a graduate student at Cornell University in Ithaca, NY. He had encountered a sequence of numbers while working on his dissertation, namely 1, 8, 78, 944,  $\ldots$
 (now entry \href{https://oeis.org/A000435}{A000435} in the OEIS \cite{oeis}), and was looking for a formula for the $n$-th term, in order to determine the rate of growth of the terms.
He noticed that although several books in the Cornell library contained sequences somewhat similar to this, this particular sequence was not mentioned. In order to keep track of the sequences in these books, NJAS started recording them on file cards, which he sorted into lexicographic order."
(From \href{https://oeis.org/wiki/Welcome#OEIS:_Brief_History}{Brief History of the OEIS}.)

    As all structures we consider are labeled, we simply use the terms tree and forest to refer to labeled trees and labeled forests.  
 Let $T$ be a rooted tree. The number of nodes in a tree or a forest (not counting $\circ$ whenever  it happens to appear as one of the labels) defines its \emph{order}.
A node $v$ of $T$ is a \emph{record} if it has the largest label along the unique path from $v$ to the root. Otherwise we say that $v$  is a non-record.  
The \emph{tree record number} $\planted(n,k)$ is defined as the number of rooted trees of order $n$ with precisely $k$ records. The tree record numbers are already present in the literature  in the context of queues
and traffic lights in the work of Haight  \cite{Haight}, \href{https://oeis.org/A259334}{A259334}.
The \emph{height} of a node of $T$ is the length of the path (number of edges) from it to the root. 
The \emph{total height} of $T$ is the sum of the heights of the nodes of $T$.

Write $[n]$ for $\{1, 2, \ldots, n\}$ and $[n]_0$ for $\{0, 1, \ldots, n\}$.
An \emph{endofunction} on $[n]$ is a function $f : [n] \to [n]$ for some  $n$. Its \emph{functional digraph} is  the directed graph with nodes $[n]$, and one directed edge $(i,f(i))$ for each $i$ in $[n]$.
Given a rooted tree $T$, we define the \emph{parent map} of $T$ as the map that sends each non-root vertex to its parent, and the root to itself.  Observe that this is a natural way of embedding the set of rooted trees in the set of endofunctions.
An endofunction  is said to be \emph{connected} if its functional graph is connected after ignoring orientations. The \emph{girth} of a connected endofunction is the number of edges in the unique cycle of its functional graph.

\subsection*{The height of a record and the girth of an endofunction}

Through the lens of Joyal’s bijection \cite{Joyal}, we reveal an elegant relationship between record numbers and connected endofunctions that constitutes the starting point of our journey. Fix two nonnegative integers $k \le n$.

\begin{thm} 
\label{thm:connected endofunctions}
The number of rooted trees of order $n$ with a distinguished record at height $k-1$ coincides with the number of connected endofunctions on $[n]$ of girth $k$.
\end{thm}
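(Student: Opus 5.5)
Via Joyal's bijection, a rooted tree together with a distinguished record corresponds to a connected endofunction, and the distinguished record turns into a vertex on the cycle.

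\textbf{Step 1: from the marked record to a cycle.} Let $T$ be a rooted tree of order $n$ and let $v$ be a record at height $k-1$. Let $r=v_{k-1}, v_{k-2},\dots,v_0=v$ be the path from the root $r$ down to $v$; it has exactly $k$ nodes. Because $v$ is a record, $v$ carries the largest label on this path. We form a cyclic permutation $\sigma$ on the set $\{v_0,\dots,v_{k-1}\}$ by letting it follow the path: $\sigma(v_i)=v_{i+1}$ for $0\le i<k-1$, and $\sigma(v_{k-1})=v_0$.

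\textbf{Step 2: building the endofunction.} Delete the edges of the path. Each path vertex then becomes the root of a subtree hanging off it. Define $f$ on $[n]$ by:
\begin{itemize}
\item $f=\sigma$ on the path vertices;
\item $f(u)$ is the parent of $u$ in $T$ for every other vertex $u$.
\end{itemize}
The functional graph of $f$ consists of one directed cycle of length $k$ with the hanging subtrees attached. Hence $f$ is a connected endofunction of girth $k$.

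\textbf{Step 3: inverting the map.} Let $f$ be a connected endofunction of girth $k$. Its unique cycle $C$ has $k$ vertices; let $m$ be the largest label on $C$. Reading the cycle starting from $m$ gives $m=f^0(m), f(m),\dots,f^{k-1}(m)$. Now:
\begin{itemize}
\item remove the edge $f^{k-1}(m)\to m$;
\item make $f^{k-1}(m)$ the root;
\item keep every other edge of $f$ as a parent pointer.
\end{itemize}
This produces a rooted tree in which the path from $m$ to the root is exactly $m, f(m),\dots,f^{k-1}(m)$. This path has $k-1$ edges, so $m$ sits at height $k-1$. Since $m$ is the maximum label on the path, $m$ is a record, and we mark it.

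\textbf{Checking the two maps are inverse.} Start from $(T,v)$. The cycle of $f$ is the path vertex set, whose maximum is $v$, so the inverse starts at $v$ and rebuilds exactly the original path and tree. Conversely, starting from $f$, the forward map applied to the rebuilt tree with its marked record reproduces the same cycle.

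\textbf{Main obstacle.} The one point that needs care is that the cyclic order can be recovered from the cycle alone, without extra data. This is exactly where the record condition is used: it pins down the starting point of the cycle as its maximum. One small detail is the boundary case $k=1$. Here the marked record is the root (height $0$) and the cycle is a fixed point, and the construction is consistent with this.
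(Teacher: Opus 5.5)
Your proposal is correct and is essentially the paper's own proof: your $f$ is just the parent map of $T$ with the root $r$ sent to $v$ (deleting the path edges and re-inserting them through $\sigma$ changes nothing else). Your inverse, which cuts the cycle at its maximum $m$ by removing the edge $f^{k-1}(m)\to m$ and rooting at $f^{k-1}(m)=\sigma^{-1}(m)$, is exactly the paper's inverse, with the record condition used in the same place to identify $v$ as the maximum of the cycle.
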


\begin{proof}[Bijective proof]  Let $T$ be a rooted   tree with a distinguished record $v$,  root $r$. Let $p$ be its parent map.
We send the pair $(T, v)$ to the endofunction 
that sends the root $r$ to the distinguished record $v$, and any other vertex to its parent in $T$.  See Figure \ref{fig:endomorphism bijection}.
Note that $f$ is a connected endofunction, as
$T$ is a spanning tree of the graph of $f$.
 
The inverse map of  $(v, T) \mapsto f$ can be described as follows. 
Let $f$ be a connected endofunction of $[n]$, and let $\sigma =(x_1 \ x_2 \ \cdots \ x_m)$ be the only cycle of $f$. First, we need to determine the vertices $v$ and $r$. These are given by
$
v = \max_i x_i,$ and $r = \sigma^{-1}(v).
$
 We reconstruct the tree $T$ by first removing the edge that connects $v$ with $r$ in the graph of $f$, and then rooting the resulting tree at $r$. Vertex $v$ is a record of $T$ as the vertices in the path from $v$ towards the root are the elements of $\sigma$, and we have defined $v$ as the maximum of these vertices. Consequently, the height of $v$ is exactly one less than the girth of $f$.

\qedhere
\end{proof}

\begin{figure}[h!]
    \centering

\resizebox{0.75\textwidth}{!}{$
  \vcenter{\hbox{\usetikzlibrary{positioning}
\tikzset{
  circ/.style={
    circle, draw, semithick, minimum size=1.5mm, inner sep=0pt, fill=black,
    }
}

\begin{tikzpicture}[
    arr/.style={semithick}
]
    \newcommand{\len}{1}

    \node[circ, label={right:$\textcolor{USred}{5}$}, fill=USred,USred] (s1) at (0,0) {};
    \node[circ, label={right:$\textcolor{USred}{7}$}, fill=USred,USred] (s2) at ($(s1) + (-90:\len)$) {};
    \node[circ, label={right:$\textcolor{USred}{3}$}, fill=USred,USred] (s3) at ($(s2) + (-90:\len)$) {};
    \node[circ, label={right:$\textcolor{USred}{9}$}, fill=USred,USred] (s4) at ($(s3) + (-90:\len)$) {};

    \node[circ, label={right:$16$}] (s5) at ($(s4) + (-90:\len)$) {};
    
    \node[circ, label={right:$11$}] (s6) at ($(s3) + (-45:\len)$) {};
    \node[circ, label={right:$6$}] (s7) at ($(s6) + (-45:\len)$) {};
    \node[circ, label={right:$4$}] (s8) at ($(s6) + (-90:\len)$) {};
    \node[circ, label={right:$15$}, label distance=0] (s16) at ($(s3) + (-135:\len)$) {};

    \node[circ, label={right:$1$}] (s10) at ($(s2) + (-135:\len)$) {};
    \node[circ, label={right:$14$}] (s11) at ($(s10) + (-135:\len)$) {};
    \node[circ, label={above:$8$}] (s12) at ($(s10) + (-180:\len)$) {};
    \node[circ, label={right:$2$}] (s9) at ($(s2) + (-45:\len)$) {};

    \node[circ, label={right:$13$}] (s13) at ($(s1) + (-135:\len)$) {};
    \node[circ, label={right:$10$}] (s14) at ($(s1) + (-45:\len)$) {};
    \node[circ, label={right:$12$}] (s15) at ($(s14) + (-45:\len)$) {};

    \begin{scope}[USred]
    \draw[arr] (s2) -- (s1);
    \draw[arr] (s3) -- (s2);
    \draw[arr] (s4) -- (s3);
    \end{scope}

    \draw[arr] (s5) -- (s4);
    
    \draw[arr] (s6) -- (s3);
    \draw[arr] (s7) -- (s6);
    \draw[arr] (s8) -- (s6);
    \draw[arr] (s16) -- (s3);
    
    \draw[arr] (s10) -- (s2);
    \draw[arr] (s11) -- (s10);
    \draw[arr] (s12) -- (s10);
    
    \draw[arr] (s13) -- (s1);
    \draw[arr] (s14) -- (s1);
    \draw[arr] (s15) -- (s14);
    \draw[arr] (s9) -- (s14);

    \draw[draw=USred, semithick, fill=white] (s4) circle [radius=1.6mm];
    \node[circ, fill=USred, draw=USred] (s4bis) at (s4) {};

\end{tikzpicture}}}%
  \hspace{1cm}%
  \longleftrightarrow%
  \hspace{1cm}%
  \vcenter{\hbox{\begin{tikzpicture}[
    circ/.style={circle, draw, semithick, minimum size=1.5mm, inner sep=0pt, fill=black},
    arr/.style={-stealth, semithick}
]
    \newcommand{\len}{0.95}
    \begin{scope}[USred]    
    \node[circ,  label={right:$\scriptsize 7$}, fill=USred,USred] (c1) at (0, 0) {};
    \node[circ, label={right:$\scriptsize 5$}, fill=USred,USred] (c2) at ($(c1) + (-135:\len)$) {};
    \node[circ,  label={right:$\scriptsize 9$}, fill=USred,USred] (c3) at ($(c2) + (-45:\len)$) {};
    \node[circ, label={right:$\scriptsize 3$}, fill=USred,USred] (c4) at ($(c3) + (45:\len)$) {};
    \end{scope}

    \node[circ, label={right:$\scriptsize 1$}] (c5) at ($(c1) + (90:\len)$) {};
    \node[circ, label={right:$\scriptsize 8$},] (c6) at ($(c5) + (45:\len)$) {};
    \node[circ, label={right:$\scriptsize 14$},] (c7) at ($(c5) + (135:\len)$) {};

    \node[circ, label={right:$\scriptsize 16$},] (c8) at ($(c3) + (-90:\len)$) {};

    \node[circ, label={right:$\scriptsize 15$},] (c9) at ($(c4) + (45:\len)$) {};
    \node[circ, label={above right:$\scriptsize 11$},] (c10) at ($(c4) + (-45:\len)$) {};
    \node[circ, label={right:$\scriptsize 6$},] (c11) at ($(c10) + (0:\len)$) {};
    \node[circ, label={right:$\scriptsize 4$},] (c12) at ($(c10) + (-90:\len)$) {};
    
    \node[circ, label={right:$\scriptsize 10$},] (c13) at ($(c2) + (135:\len)$) {};
    \node[circ, label={right:$\scriptsize 13$},] (c14) at ($(c2) + (-135:\len)$) {};
    \node[circ, label={right:$\scriptsize 2$},] (c15) at ($(c13) + (135:\len)$) {};
    \node[circ, label={right:$\scriptsize 12$},] (c16) at ($(c13) + (-135:\len)$) {};

    \begin{scope}[USred]
    \draw[arr] (c1) -- (c2);
    \draw[arr] (c2) -- (c3);
    \draw[arr] (c3) -- (c4);
    \draw[arr] (c4) -- (c1);
    \end{scope}

    \draw[arr] (c5) -- (c1);
    \draw[arr] (c6) -- (c5);
    \draw[arr] (c7) -- (c5);
    
    \draw[arr] (c16) -- (c13);
    \draw[arr] (c14) -- (c2);
    \draw[arr] (c15) -- (c13);
    \draw[arr] (c13) -- (c2);

    \draw[arr] (c8) -- (c3);

    \draw[arr] (c9) -- (c4);
    \draw[arr] (c10) -- (c4);
    \draw[arr] (c11) -- (c10);
    \draw[arr] (c12) -- (c10);
    
\end{tikzpicture}}}
  $}
\caption{The bijection of Theorem \ref{thm:connected endofunctions}. The distinguished record $v = 9$ is circled. The path from $v$ to the root (on the left) and the only cycle in the resulting endofunction (on the right) are highlighted.}
    \label{fig:endomorphism bijection}
\end{figure}

\begin{cor} 
\label{cor:connected endofunctions}
The total number of records in all rooted trees with $n$ nodes is equal to the number of connected endofunctions on $[n]$. 
\end{cor}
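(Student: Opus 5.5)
The corollary follows by summing Theorem~\ref{thm:connected endofunctions} over all admissible values of $k$. The plan is to write the total number of records in all rooted trees of order $n$ as the number of pairs $(T,v)$, where $T$ is a rooted tree on $[n]$ and $v$ is a record of $T$. These pairs are then partitioned according to the height of the distinguished record $v$.

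Since $T$ has $n$ nodes, the height $h$ of any node lies in $\{0,1,\ldots,n-1\}$. Setting $k=h+1$ gives $k\in[n]$. The root always has height $0$ and is always a record, so $k=1$ does occur. Consequently
\[
\sum_{T}\#\{\text{records of }T\} \;=\; \sum_{k=1}^{n} \#\{(T,v) : v \text{ a record of } T \text{ at height } k-1\}.
\]
By Theorem~\ref{thm:connected endofunctions}, the $k$-th summand equals the number of connected endofunctions on $[n]$ of girth $k$.

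It remains to check that the connected endofunctions on $[n]$ are partitioned by girth into classes indexed exactly by $k\in[n]$. A connected functional digraph has exactly one cycle, so its girth is well defined. That cycle has length between $1$ (a fixed point) and $n$. Summing over $k$ therefore counts every connected endofunction exactly once, which gives the corollary.

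Equivalently, one can take the bijection from the proof of Theorem~\ref{thm:connected endofunctions} and forget the height. The map $(T,v)\mapsto f$, which sends the root to $v$ and every other vertex to its parent, is itself a bijection between all record-pointed trees and all connected endofunctions. The main obstacle is only bookkeeping: one must make sure the index ranges match on both sides. In particular, the case $k=1$ should correspond to the root pointing to itself, which it does, since then $f$ has the fixed point $r=v$.
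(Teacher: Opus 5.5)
Your proposal is correct and takes the same approach as the paper, which leaves the corollary unproved as an immediate consequence of Theorem~\ref{thm:connected endofunctions}. Summing that theorem over $k\in[n]$ (heights $0,\dots,n-1$ on one side, girths $1,\dots,n$ on the other) is the intended argument. So is your equivalent observation that the bijection $(T,v)\mapsto f$ works without tracking the height.
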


The sequence counting connected endofunctions has been studied in depth, see \cite{Flajolet}, or the entries of the OEIS \href{https://oeis.org/A001865}{A001865} and  \href{https://oeis.org/A201685}{A20168}. 

\subsection*{Some consequences}

Let $f$ be an endofunction on $[n]$.   The set of elements of the form $f^k(i)$ for some $k\ge0$ is referred to as the \emph{orbit} of  $i$, for any fixed element $i$ of $[n]$. If the endofunction $f$ is connected we refer to the \emph{cycle} of $f$ as the cyclic permutation obtained by restricting $f$ to the elements of its unique cycle.
Finally, an element $i \in [n]$ is said to be a \emph{record} if $ i$ is larger than or equal to all the elements of its orbit. Fix three nonnegative integers $k,m \le n$.

\begin{thm}
\label{Thm:endo_girth}
    The number of connected endofunctions on $[n]$ of girth $m$ and at least $k + 1$ records coincides with the number of connected endofunctions on $[n]$ of girth $m + 1$ and at least $k$ records.  
\end{thm}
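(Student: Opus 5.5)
The plan is to translate the statement into a statement about rooted trees via the bijection of Theorem~\ref{thm:connected endofunctions}, and then to build an explicit bijection on trees.

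\textbf{Step 1 (records of $f$ as records of $T$).} Let $f$ be connected of girth $m$ with cycle maximum $v$, and let $(T,v)$ be the tree with distinguished record at height $m-1$ associated with $f$ by Theorem~\ref{thm:connected endofunctions}. Here $T$ is rooted at $r=\sigma^{-1}(v)$. I will prove that the records of $f$ are exactly the records of $T$ whose label is at least $v$.

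\begin{itemize}
\item Let $j$ be a record of $f$. Its orbit contains the whole cycle, so $j\ge v$. The path from $j$ to $r$ in $T$ runs down to the cycle element $x$ where $j$'s branch attaches, and then along the cycle arc from $x$ to $r$. This path is contained in the orbit of $j$, so $j$ is a record of $T$.
\item Conversely, let $j\ge v$ be a record of $T$. The orbit of $j$ adds only the remaining cycle elements to its $T$-path. Every such element is at most $v\le j$, so $j$ is a record of $f$.
\end{itemize}

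Since every ancestor of $v$ in $T$ is smaller than $v$, the statement becomes the following. The number of pairs $(T,v)$, with $v$ a record at height $m-1$ and at least $k$ records of $T$ with label $>v$, equals the number of pairs $(T',v')$, with $v'$ a record at height $m$ and at least $k-1$ records of $T'$ with label $>v'$.

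\textbf{Step 2 (the move).} I aim for a bijection that raises the height of the distinguished record by one and uses up exactly one record above it. Given $(T,v)$ with at least one record exceeding $v$, let $w$ be the smallest record of $T$ with label $>v$. The records larger than $v$ are all outside the root-path of $v$. In the endofunction picture, I would splice $w$ into the cycle immediately before $v$: redefine $f'(\sigma^{-1}(v))=w$ and $f'(w)=v$, keeping $w$'s subtree attached to $w$.

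I have checked the following properties of this splice.
\begin{itemize}
\item The girth becomes $m+1$.
\item The new cycle maximum is $w$.
\item The node $v$ stops being a record.
\item Every other record survives. This uses the minimality of $w$ for records outside $w$'s subtree, and the fact that all elements strictly between $w$ and the cycle are smaller than $w$ for records inside it.
\item No non-record becomes a record.
\end{itemize}
Hence the record count drops by exactly one, which gives the required shift from ``at least $k+1$'' to ``at least $k$''.

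\textbf{Main obstacle (invertibility).} The splice as described forgets where $w$ was originally attached, namely the vertex $f(w)$. So it is not injective as stated. What I would try first is to encode $f(w)$ in the position at which $w$ is inserted into the cycle, rather than always inserting it before $v$.

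Concretely, I would cut the edge $w\to f(w)$, and cut the cycle just before the element $x$ where $w$'s old branch entered it. I would then reinsert the segment from $f(w)$ down to $x$, together with $w$, so that the inverse can recover the data as follows:
\begin{itemize}
\item it identifies $w$ as the new cycle maximum;
\item it recovers the old attachment point by reading off the cycle element following $w$.
\end{itemize}
Alternatively, the ``at least'' formulation suggests avoiding a sharp bijection altogether. One can compare the statistic ``$k$-th largest record label'' on both sides: exchanging the roles of $v$ and $w$ then amounts to a sign-free transfer between the relevant sets.

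I expect to spend most of the effort making this reinsertion well defined and checking that the record count still drops by exactly one. The record bookkeeping in Step~1 and the girth computation are routine.
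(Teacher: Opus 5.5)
\paragraph{A genuine gap: the map you construct is not a bijection.}
Your Step~1 is correct: the records of $f$ are exactly the records of $T$ with label at least $v$. Your bookkeeping for the splice is also correct. With $v$ the cycle maximum and $w$ the smallest record above it, $f'$ has girth $m+1$, and its records are exactly those of $f$ minus $v$.

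The problem is that the theorem is an equality of cardinalities, and the splice is the only map you actually construct, and it is not a bijection.
\begin{itemize}
\item It is not injective. Take $n=3$, $m=1$, $k=1$. The endofunctions $1\mapsto 2,\ 2\mapsto 2,\ 3\mapsto 2$ and $1\mapsto 2,\ 2\mapsto 2,\ 3\mapsto 1$ both have records $\{2,3\}$ and $w=3$. Both are sent to $1\mapsto 2,\ 2\mapsto 3,\ 3\mapsto 2$.
\item It is in general not surjective. In every image, the successor of the cycle maximum $w$ is the largest of the remaining cycle vertices, and this fails for many maps of girth $m+1\ge 3$.
\end{itemize}
Your suggested repair does not work as described. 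If $w$ is placed on the cycle together with the segment from $f(w)$ to the entry point $x$, the girth increases by the distance from $w$ to the cycle. That distance is $1$ only when $f(w)$ already lies on the cycle. The ``sign-free transfer'' alternative is not an argument.

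\paragraph{The missing idea: freeze the $k$ largest records instead of consuming the smallest one.}
The paper's construction is as follows.
\begin{itemize}
\item Let $v$ be the $k$-th largest record of $f$. It is off the cycle, since the only record on $\sigma$ is the smallest one.
\item Let $w$ be the \emph{last} vertex of the orbit of $v$ before the cycle, so that $f(w)\in\sigma$.
\item Redirect a single arrow: set $g(x)=w$ for $x=\sigma^{-1}(f(w))$, and $g=f$ elsewhere.
\end{itemize}
The arrow $w\mapsto f(w)$ is kept, so nothing is forgotten and the girth rises by exactly one. Every $g$-orbit is the corresponding $f$-orbit with $w$ added, and $w\le v$. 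Hence the records of $g$ are the records of $f$ that are at least $w$; in particular, the $k$ largest records are unchanged.

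The inverse proceeds in three steps:
\begin{itemize}
\item recover $v$ as the $k$-th largest record of $g$;
\item recover $w$ as the first vertex of the $g$-orbit of $v$ on the cycle $\pi$ of $g$;
\item restore $f(\pi^{-1}(w))=g(w)$.
\end{itemize}
Records smaller than $w$ may disappear, so the record count does not drop by exactly one. This is exactly what the ``at least'' formulation allows. Insisting on an exact drop by one, as you did, makes the construction harder than necessary.
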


\begin{proof}[Bijective proof] 
    Let $f$ be a connected endofunction of girth $m$ and at least $k+1$ records. Let $v$ be the $k$-th greatest record of $f$.  
    Observe that there is exactly one record on the cycle $\sigma$ of $f$: the smallest record of the endofunction, and that it is smaller than the $k$-th greatest record. Therefore, $v$ does not belong to $\sigma$.
    
Let $w$ be the first element in the orbit of $v$ under $f$ such that $f(w)$ belongs to $\sigma$.
Denote $x = \sigma^{-1}(f(w))$.
We define an endofunction $g$ by 
$
g(z) = f(z) \quad \text{if } z \neq x,$ and $ g(x) = w.
$
It is not hard to verify that $g$ is a connected endofunction of girth $m+1$ and its $k$ greatest records coincide with those of $f$. Reversing the bijection is straightforward by first identifying $v$ and $w$.
\end{proof}

\begin{figure}
\[
\centering
\vcenter{\hbox{
\begin{tikzpicture}[
    circ/.style={circle, draw, semithick, minimum size=1.5mm, inner sep=0pt, fill=black},
    rec/.style={circle, draw=USred,  minimum size=1.5mm, inner sep=0pt, fill=USred},
    arr/.style={-stealth, semithick}
]
    \def\sepFirstLevel{0.8}
    \def\len{0.8}
    \node[rec, label={left:3}] (3) at (0, 0) {};
    \node[rec, label={right:$4$}] (4) at ($(3) + (-50:\len)$) {};
    \node[rec, label={right:$5$}, label={left:\textcolor{USred}{$v$}}, label={above:\textcolor{USred}{$w$}}] (5) at ($(3) + (-130:\len)$) {};
    \node[rec, label={below:$6$}] (6) at ($(5) + (-120:0.9*\len)$) {};
    \node[circ, label={below:$2$}] (2) at ($(5) + (-60:0.9*\len)$) {};
    
    \node[circ, label={below:$1$}] (1) at ($(4) + (-120:0.9*\len)$) {};
    \node[rec, label={below:$7$}] (7) at ($(4) + (-60:0.9*\len)$) {};

\begin{scope}[-stealth]    
   \draw (3) edge[-stealth, out=45, in=135, looseness=12] (3); 
    \draw (6) -- (5);
    \draw (2) -- (5);
    \draw (5) -- (3);
    \draw (4) -- (3);
    \draw (1) -- (4);
    \draw (7) -- (4);
\end{scope}
\end{tikzpicture}}}
\qquad
\vcenter{\hbox{
\begin{tikzpicture}[
    circ/.style={circle, draw, semithick, minimum size=1.5mm, inner sep=0pt, fill=black},
    rec/.style={circle, draw=USred,  minimum size=1.5mm, inner sep=0pt, fill=USred},
    arr/.style={-stealth, semithick}
]
    \def\sepFirstLevel{0.8}
\def\len{0.8}
    \node[circ, label={above:3}] (3) at (0, 0) {};
    \node[circ, label={above:$4$}] (4) at ($(3) + (0:\len)$) {};
    \node[rec, label={above:$5$}] (5) at ($(3) + (-180:\len)$) {};
    \node[rec, label={above:$7$}] (7) at ($(4) + (30:\len)$) {};
    \node[circ, label={above:$1$}] (1) at ($(4) + (-30:\len)$) {};
    \node[rec, label={above:$6$}, label={left:\textcolor{USred}{$v$}},
    label={right:\textcolor{USred}{$w$}}] (6) at ($(5) + (150:\len)$) {};
    \node[circ, label={above:$2$}] (2) at ($(5) + (210:\len)$) {};

    \begin{scope}[-stealth]
    \draw (6) -- (5);
    \draw (2) -- (5);
    \draw (4) -- (3);
    \draw (5) edge[bend right] (3);
    \draw (3) edge[bend right] (5);
    \draw (1) -- (4);
    \draw (7) -- (4); 
    \end{scope}
\end{tikzpicture}
}}
\qquad
\vcenter{\hbox{
\begin{tikzpicture}[
    circ/.style={circle, draw,  minimum size=1.5mm, inner sep=0pt, fill=black},
    rec/.style={circle, draw=USred,  minimum size=1.5mm, inner sep=0pt, fill=USred},
    arr/.style={-stealth}
]
    \def\len{0.8}
    \node[circ, label={below:$\scriptsize 3$},] (6) at (0, 0) {};
    \node[circ, label={below:$\scriptsize 5$},] (5) at ($(6) + (180:\len)$) {};
    \node[rec, label={left:$\scriptsize 6$},] (3) at ($(6) + (120:\len)$) {};
    \node[circ, label={above:$4$}, label={[yshift=2pt, xshift=2pt]below left:\textcolor{USred}{$w$}}] (4) at ($(6) + (-30:\len)$) {};
    \node[rec, label={above:$7$}, label={right:\textcolor{USred}{$v$}}] (7) at ($(4) + (0:\len)$) {};
    \node[circ, label={above:$1$}] (1) at ($(4) + (-60:\len)$) {};
    \node[circ, label={below:$2$}] (2) at ($(5) + (-150:\len)$) {};

    \begin{scope}[-stealth]
        \draw (5) -- (6);
        \draw (3) -- (5);
        \draw (6) -- (3);
        \draw (4) -- (6);
        \draw (1) -- (4);
        \draw (7) -- (4);
        \draw (2) -- (5);
    \end{scope}
\end{tikzpicture}}}
\qquad
\vcenter{\hbox{
\begin{tikzpicture}[
    circ/.style={circle, draw, semithick, minimum size=1.5mm, inner sep=0pt, fill=black},
    rec/.style={circle, draw=USred,  minimum size=1.5mm, inner sep=0pt, fill=USred},
    arr/.style={-stealth}
]
    \newcommand{\len}{0.8}
    \begin{scope}    
    \node[circ,  label={right:$\scriptsize 3$}] (c1) at (0, 0) {};
    \node[rec, label={right:$\scriptsize 6$}] (c2) at ($(c1) + (-135:\len)$) {};
    \node[circ,  label={right:$\scriptsize 5$}] (c3) at ($(c2) + (-45:\len)$) {};
    \node[circ, label={right:$\scriptsize 4$}] (c4) at ($(c3) + (45:\len)$) {};
    \end{scope}

    \node[rec, label={right:$\scriptsize 7$},] (15) at ($(c4) + (45:\len)$) {};
    \node[circ, label={right:$\scriptsize 1$},] (11) at ($(c4) + (-45:\len)$) {};
    
    \node[circ, label={right:$\scriptsize 2$},] (10) at ($(c3) + (-90:\len)$) {};

    \begin{scope}
    \draw[arr] (c1) -- (c2);
    \draw[arr] (c2) -- (c3);
    \draw[arr] (c3) -- (c4);
    \draw[arr] (c4) -- (c1);
    \end{scope}

    \draw[arr] (10) -- (c3);
    \draw[arr] (11) -- (c4);
    \draw[arr] (15) -- (c4);

\end{tikzpicture}
}}
\]
    \caption{The repeated application of the bijection in Theorem \ref{Thm:endo_girth}. The records are highlighted. Note that the leftmost endofunction has at least $4$ records and girth $1$, and the rightmost endofunction has at least $1$ record and girth $4$.}
    \label{fig:placeholder}
\end{figure}

\begin{cor}
\label{pablos_lemma}
The number of connected endofunctions on $[n]$ of girth $m$ and at least $k$ records only depends on $m+k$. In particular, the number of rooted trees with $n$ nodes and at least $k$ records coincides with the number of connected endofunctions on $[n]$ of girth $k$.
\end{cor}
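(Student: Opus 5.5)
The first assertion will follow by iterating Theorem~\ref{Thm:endo_girth}. Write $E(n,m,k)$ for the number of connected endofunctions on $[n]$ of girth $m$ with at least $k$ records. Theorem~\ref{Thm:endo_girth} gives $E(n,m,k+1)=E(n,m+1,k)$ for all admissible $m,k$. Given two pairs $(m,k)$ and $(m',k')$ with $m+k=m'+k'$, say $m\le m'$, I will apply this identity $m'-m$ times, moving one unit at a time from the record parameter to the girth parameter. This yields $E(n,m,k)=E(n,m',k')$. The only care needed is the range of the parameters: girth must stay at least $1$, and the record count stays nonnegative. Each step uses the theorem with record parameter at least $1$, which is exactly what is needed. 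So $E(n,m,k)$ depends only on $m+k$ (and $n$).

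For the ``in particular'', I will connect trees to girth-$1$ endofunctions. A connected endofunction of girth $1$ is exactly the parent map of a rooted tree: the root is the unique fixed point. Under this identification, the orbit of $i$ is the path from $i$ to the root. Hence $i$ is a record of the endofunction in the paper's sense ($i$ at least every element of its orbit) exactly when $i$ is a record of the tree.

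Two small checks are needed here. First, the root is always a record of the tree, and it is also a record of the endofunction, since its orbit is $\{r\}$. Second, the definitions agree on ties, because labels are distinct. Therefore the number of rooted trees on $n$ nodes with at least $k$ records equals $E(n,1,k)$.

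By the first part, $E(n,1,k)=E(n,k,1)$, since $1+k=k+1$. Every connected endofunction has at least one record: the maximum of the cycle, or indeed the global maximum $n$, is a record. Hence $E(n,k,1)$ is simply the number of connected endofunctions on $[n]$ of girth $k$, which proves the second claim.

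The main point requiring attention is the bookkeeping of boundary cases: $k=0$, and girth values beyond $n$, where both sides vanish. One also has to be sure that ``at least $k+1$ records'' in Theorem~\ref{Thm:endo_girth} is used with $k\ge 0$, so that each application is legitimate. I expect no genuine obstacle beyond this.
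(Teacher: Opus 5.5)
Your argument is the paper's argument. You iterate Theorem~\ref{Thm:endo_girth} along a line $m+k=c$. You then identify girth-$1$ connected endofunctions with parent maps of rooted trees, so orbits are root paths and the two notions of record agree. Finally you use that every connected endofunction has a record, giving $E(n,1,k)=E(n,k,1)$. For $k\ge 1$ this is complete.

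The one thing you get wrong is the boundary bookkeeping you single out. The identity $E(n,m,k+1)=E(n,m+1,k)$ may only be applied with $k\ge 1$, not $k\ge 0$. Theorem~\ref{Thm:endo_girth} is stated for nonnegative $k$, but its proof picks out the $k$-th greatest record. Moreover, for $k=0$ the condition ``at least $0$ records'' is vacuous, so the identity would say that connected endofunctions of girths $m$ and $m+1$ are equinumerous. Already for $n=2$ this fails: $E(2,1,1)=2$ (the two rooted trees) but $E(2,2,0)=1$ (the transposition), although $1+1=2+0$. The ``in particular'' also fails at $k=0$, since there are $n^{n-1}$ trees but no endofunction of girth $0$.

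So the record parameter must stay at least $1$ along the chain, not merely nonnegative. The statement is to be read for $m,k\ge 1$, and the chain from $(m,k)$ stops at $(m+k-1,1)$, exactly as in the paper's $E_{m,k}\leftrightarrow E_{m+1,k-1}\leftrightarrow\cdots\leftrightarrow E_{m+k-1,1}$. With that restriction your proof goes through as written. The case $k=0$ is not one ``where both sides vanish''; it has to be excluded.
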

\begin{proof}[Bijective proof]
Fix two positive integers $k$ and $m$ and let $E_{m,k}$ denote the set of endofunctions of $[n]$ with girth $m$ and at least $k$ records. Applying repeatedly the bijection of Theorem \ref{Thm:endo_girth}, we obtain the following 
bijections:
\[
E_{m, k} \longleftrightarrow E_{m +1, k -1} \longleftrightarrow E_{m+2, k-2} \longleftrightarrow \dots \longleftrightarrow E_{m + k - 1, 1}.
\]
Therefore, the cardinality of $E_{m, k}$ only depends on $m + k$. Thus, with $m=1$, $
|E_{1, k}| = |E_{k, 1}|.$
The result follows by noticing that endofunctions with girth 1 are parent maps of trees, and that any endofunction has at least one record.
\end{proof}

We conclude that the number of rooted trees of order $n$ with at least $k$ records coincides with the number of tree cycles of order $n$ of $k$ rooted trees.
 The number of trees of order $n$ with $k$ records such that the subgraph induced by its records is connected coincides with the number of cycles of order $n$ of $k$ unrooted trees.
 The set of rooted trees of order $n$ with a distinguished record at height $k-1$ is in bijection with the set of trees of order $n$ with at least $k$ records. 
    The set of rooted trees of order $n$ and with a distinguished record $v$ at height $m$ and with at least $k$ records greater than $v$ is in bijection with the set of endofunctions on $[n]$ of girth $m+k+1$.

\subsection*{The tree record formula}
We provide a combinatorial proof for the elegant closed formula for the tree record numbers of \cite{LRT-GF-records}.

\begin{cor}[Theorem 4.1 of \cite{LRT-GF-records}]
\label{prop:R_planted}
  The number of  rooted   trees labeled with  $[n]$ and  with $k$ records obeys the equation
    \[
    \planted(n,k) = 
       k \,(n-1)\cdots
    (n-k+1) \, n^{n-k-1}.
     \]
\end{cor}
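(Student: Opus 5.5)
By Corollary~\ref{pablos_lemma}, the number of rooted trees of order $n$ with at least $k$ records equals the number $C(n,k)$ of connected endofunctions on $[n]$ of girth $k$. Hence
\[
\planted(n,k) = C(n,k) - C(n,k+1).
\]
The plan is to count $C(n,k)$ explicitly and then take this difference.

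To count connected endofunctions of girth $k$, I would first choose the cycle as an ordered cyclic arrangement. There are $\binom{n}{k}(k-1)! = n(n-1)\cdots(n-k+1)/k$ ways to do this. The functional digraph is then a rooted forest on $[n]$ whose $k$ roots are exactly the cycle elements, with each root attached to its cycle successor. By Cayley's forest formula, the number of rooted forests on $[n]$ with a prescribed set of $k$ roots is $k\,n^{n-k-1}$. Multiplying,
\[
C(n,k) = \frac{n(n-1)\cdots(n-k+1)}{k}\cdot k\, n^{n-k-1} = (n-1)\cdots(n-k+1)\, n^{n-k}.
\]
As a sanity check, for $k=1$ this gives $n^{n-1}$, the number of rooted trees, as it should.

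Subtracting, with $(n)_{k}$ denoting the falling factorial,
\[
C(n,k)-C(n,k+1) = (n-1)\cdots(n-k+1)\, n^{n-k-1}\bigl(n-(n-k)\bigr),
\]
which is $k\,(n-1)\cdots(n-k+1)\,n^{n-k-1}$, the claimed formula. The boundary case $k=n$ is handled by setting $C(n,n+1)=0$ and reading the formula with $n^{-1}\cdot n = 1$. I would verify this case separately: it gives $(n-1)!$, the number of cyclic permutations, which matches, since a tree in which every vertex is a record is a path decreasing toward the root.

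The main obstacle is that this argument invokes Cayley's forest formula. The abstract says a new proof of that formula is derived from these bijections, so using it here might be circular in spirit. To stay self-contained, I would instead count the tree structures hanging off the cycle directly. One option is the standard Joyal/Prüfer-type argument showing that functions on $[n]$ in which a fixed $k$-set forms the eventual image cycle number $k\,n^{n-k-1}$. Another is to extend Theorem~\ref{thm:connected endofunctions}'s bijection and count pairs (rooted tree, distinguished vertex at height $k-1$ forming a record). If neither works cleanly, I would simply cite the classical forest formula for this step.
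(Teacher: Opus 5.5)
Your proof is correct and is essentially the paper's own argument. Like you, the paper counts connected endofunctions of girth $k$ as $\frac{n(n-1)\cdots(n-k+1)}{k}$ cyclic arrangements times $k\,n^{n-k-1}$ forests, citing Cayley's forest formula directly, so your circularity worry does not arise here. It then subtracts the girth-$(k+1)$ count using Corollary~\ref{pablos_lemma}.

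One correction to your side check for $k=n$: trees in which every vertex is a record are not paths decreasing toward the root; there is only one such path. They are exactly the increasing (recursive) trees, of which there are $(n-1)!$. In any case, $k=n$ needs no separate check, since $C(n,n+1)=0$ and your general computation already gives $(n-1)!$.
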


\begin{proof}
The number of endofunctions of girth $k$ is 
$
n(n-1)\dots(n-k+1) \ n^{n-k-1}
$
as $kn^{n-k-1}$ counts the number of  labeled rooted forests  with a fixed $k$-set of roots (Cayley's forest formula), and $n(n-1)\dots(n-k+1)/k$ counts the number of ways to choose the $k$-set of roots of the trees and  order it cyclically. Finally, Corollary \ref{pablos_lemma} allows us to deduce the formula for the record numbers by subtracting  the number of endofunctions of girth $k+1$ from  those of girth $k$.
\end{proof}

\subsection*{The genesis sequence of the On-Line Encyclopedia of Integer Sequences}

\label{se:genesis}

The \emph{genesis sequence} $\Gamma = (\gamma_k)_{k}$ of the OEIS  is defined by setting $\gamma_k$ to be the sum of the height of all nodes of  all rooted trees of order $k$, divided by $k$, OEIS \href{https://oeis.org/A000435}{A000435}. 
Let $T$ be a rooted tree. A \emph{catalyst} for $T$ to be a pair $(u, v)$ of distinct vertices of $T$ such that $v$ is an ancestor of $u$, and denote by $C(T)$ the set of catalysts for $T$.  A closely related, but different, notion of catalyst is introduced in \cite{briand2025nonintersectingpathsdeterminantdistance}. The number of catalysts for $T$ is equal to the sum of the heights of all vertices in $T$.

\begin{thm}
\label{thm:catalysts}
The  total number of non-root records in all rooted trees labeled with $[n]$ is equal to $\gamma_n$, the normalized total height of all nodes in all rooted trees with $n$ labeled nodes.
\end{thm}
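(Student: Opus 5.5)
The plan is to pass through Theorem~\ref{thm:connected endofunctions} and then compare exponential generating functions, using the Cayley tree function $T(z)=\sum_{n\ge1} n^{n-1}z^n/n!$, which satisfies $T=z e^{T}$. Afterwards I would look for a bijective reading of the resulting identity.

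\textbf{Step 1: reduce records to endofunctions.} By Theorem~\ref{thm:connected endofunctions}, a rooted tree with a distinguished record at height $k-1$ corresponds to a connected endofunction of girth $k$. A record is a non-root record exactly when its height is at least $1$, that is, when the girth is at least $2$. Hence the total number of non-root records over all rooted trees on $[n]$ equals the number of connected endofunctions on $[n]$ of girth at least $2$. A connected endofunction is a cycle of rooted trees, so the EGF of all of them is $\log\frac{1}{1-T}$. Those of girth $1$ are parent maps of rooted trees, with EGF $T$. The non-root record EGF is therefore
\[
N(z)=\log\frac{1}{1-T}-T .
\]

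\textbf{Step 2: count catalysts.} Since the number of catalysts of a tree equals its total height, $n\gamma_n$ is the number of triples $(T,u,v)$ with $v$ a proper ancestor of $u$. Fix the path from the root down to $u$; suppose it has $j+1$ vertices, $j\ge1$. Each vertex on the path carries the rooted tree formed by itself and its subtrees off the path, so this structure has EGF $T^{j+1}$. The vertex $v$ can be any of the $j$ path vertices other than $u$. This gives the catalyst EGF
\[
C(z)=\sum_{j\ge1} j\,T^{j+1}=\frac{T^2}{(1-T)^2}.
\]

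\textbf{Step 3: compare.} The statement amounts to $n\cdot[\text{non-root records}]=[\text{catalysts}]$, which in EGF terms is $z\,N'(z)=C(z)$. Differentiating $T=ze^T$ gives $zT'=T/(1-T)$. Then
\[
zN'=\frac{zT'}{1-T}-zT'=\frac{T}{(1-T)^2}-\frac{T}{1-T}=\frac{T^2}{(1-T)^2}=C ,
\]
and dividing the coefficient identity by $n$ gives exactly $\gamma_n$.

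The main obstacle is making this bijective, as the paper's style suggests. The factor $n$ corresponds to the operator $z\,d/dz$, that is, to marking one extra vertex of an endofunction of girth at least $2$. One then needs to turn the triple (endofunction, cycle maximum $v$, marked vertex $x$) into a tree with a catalyst. I would first try the following: cut the edge into the cycle maximum as in Theorem~\ref{thm:connected endofunctions} to get a tree with its cycle path, then reroute at the marked vertex $x$ so that the path from the root to $x$ becomes the vertebra containing the catalyst pair. I would verify this map against the structure $\sum_j jT^{j+1}$ found in Step~2. If the rerouting cannot be made well defined, the generating-function argument above already gives a complete proof.
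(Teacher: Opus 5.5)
Your proof is correct, but it takes a different route from the paper. The paper gives an explicit bijection $[n]\times P(n)\to C(n)$. From $(v,T,r)$ it forms the endofunction $f$ of Theorem~\ref{thm:connected endofunctions} and lets $x$ be the first vertex of the cycle $\sigma$ in the orbit of $v$. It then redirects $\sigma^{-1}(x)$ to $v$, so that the new cycle $\pi$ absorbs the path from $v$ to $x$. Finally it cuts $\pi$ just before $x$ and roots the tree at $w=\pi^{-1}(x)$. In the result $v$ is an ancestor of $x$, exactly $m$ levels up, where $m\ge 2$ is the girth of $f$. This construction does not first cut at the cycle maximum, as your sketch does; the rerouting happens inside $f$.

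Your generating-function computation is shorter and self-checking, and it explains the factor $n$ as the pointing operator $z\,d/dz$. The bijection instead explains the identity and refines it. Your computation predicts that refinement if you split it by girth. Write $\mathcal{T}$ for the Cayley function (your $T$). Pointing a girth-$m$ connected endofunction at a vertex off its cycle has EGF $\mathcal{T}^m/(1-\mathcal{T})-\mathcal{T}^m=\mathcal{T}^{m+1}/(1-\mathcal{T})$. That is the EGF of catalysts $(u,v)$ with $v$ exactly $m$ levels above $u$. Pointing at a vertex on the cycle gives $\sum_{m\ge 2}\mathcal{T}^m=\mathcal{T}^2/(1-\mathcal{T})$, the EGF of catalysts with $v$ the parent of $u$.

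Your argument covers this second case automatically, whereas the paper's construction, read literally, degenerates there. If $v$ lies on $\sigma$ then $x=v$ and $g=f$, so the output $(x,v)$ is not a catalyst. That case must be handled separately, for instance by sending it to the pair formed by $v$ and its parent $\sigma(v)$ in the tree rooted at $\sigma^{-1}(v)$.
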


\begin{proof}[Bijective proof]
    Let $P(n)$ denote the set of pairs $(T, r)$ where $T$ is a rooted tree labeled with $[n]$ and $r$ is a record of $T$ different from the root. 
    We exhibit an explicit bijection between the set $[n]\times P(n)$ and the set 
    $C(n) = \bigsqcup_T C(T).$
    Note that the existence of such a bijection implies the result. 
    Let $(v, T, r)$ be an element of $[n] \times P(n)$.
     We shall construct a tree $T'$ and a vertex $x$ such that $(x, v)$ is a catalyst for $T'$. 
    First, we compute the endofunction $f$ corresponding with the pair $(T, r)$ under the bijection of Theorem \ref{thm:connected endofunctions}. Let $\sigma$ denote the cycle of $f$, and
    let $x$ be the first element in the orbit of $v$ under $f$ that belongs to the cycle $\sigma$. Define an endofunction $g$ by
    $
    g(z) = f(z)$ if $z\neq \sigma^{-1}(x),$ and $g(\sigma^{-1}(x)) = v,$ otherwise. 
    Let $\pi$ denote the cycle of $g$ and let $w = \pi^{-1}(x)$. 
    Lastly, define $T'$ to be the tree with parent map $p$, where $p$ is given by 
    $p(z) = g(z)$ if  $z\neq w, p(w)=w. $
    Then root $T'$  at $w$. It is simple to check that $(x, v) \in C(T')$, and that all steps are reversible.    
\end{proof}

\begin{figure}
    \centering
    $\centering
\vcenter{\hbox{
\begin{tikzpicture}[
    circ/.style={circle, draw, semithick, minimum size=1.5mm, inner sep=0pt, fill=black},
    arr/.style={-stealth, semithick}
]
    \def\sepFirstLevel{0.8}
    \def\len{0.8}
    \node[circ, label={above:3}] (3) at (0, 0) {};
    \node[circ, label={right:$4$}] (4) at ($(3) + (-50:\len)$) {};
    \node[circ, label={right:$5$}] (5) at ($(3) + (-130:\len)$) {};
    \node[circ, label={below:$6$}] (6) at ($(5) + (-120:0.9*\len)$) {};
    \node[circ, label={below:$2$}] (2) at ($(5) + (-60:0.9*\len)$) {};
    
    \node[circ, label={below:$1$}] (1) at ($(4) + (-120:0.9*\len)$) {};
    \node[circ, label={below:$7$}] (7) at ($(4) + (-60:0.9*\len)$) {};

    \draw (6) -- (5);
    \draw (2) -- (5);
    \draw (5) -- (3);
    \draw (4) -- (3);
    \draw (1) -- (4);
    \draw (7) -- (4);
    
    \draw[draw=USred, semithick, fill=white] (6) circle [radius=1.6mm];
    \node[circ, fill=USred, draw=USred,
    label={left:$\textcolor{USred}{v}$}
    ] (6bis) at (6) {};
    
    \draw[draw=darkgreen, semithick, fill=white] (4) circle [radius=1.6mm];
    \node[circ, fill=darkgreen, draw=darkgreen, label={left:$\textcolor{darkgreen}{r}$}] (4bis) at (4) {};
\end{tikzpicture}}}
\qquad
\vcenter{\hbox{
\begin{tikzpicture}[
    circ/.style={circle, draw, semithick, minimum size=1.5mm, inner sep=0pt, fill=black},
    arr/.style={-stealth, semithick}
]
    \def\sepFirstLevel{0.8}
\def\len{0.8}
    \node[circ, fill=darkblue, draw=darkblue, label={above:3}, label={below:$\textcolor{darkblue}{x}$}] (3) at (0, 0) {};
    \node[circ, draw=darkgreen, fill=darkgreen, label={above:$4$}, label={below:$\textcolor{darkgreen}{r}$}] (4) at ($(3) + (0:\len)$) {};
    \node[circ, label={above:$5$}] (5) at ($(3) + (-180:\len)$) {};
    \node[circ, label={above:$7$}] (7) at ($(4) + (30:\len)$) {};
    \node[circ, label={above:$1$}] (1) at ($(4) + (-30:\len)$) {};
    \node[circ, label={above:$6$}, label={left:$\textcolor{USred}{v}$}] (6) at ($(5) + (150:\len)$) {};
    \node[circ, label={above:$2$}] (2) at ($(5) + (210:\len)$) {};
    %
    

    \begin{scope}[-stealth]
    \draw (6) -- (5);
    \draw (2) -- (5);
    \draw (5) -- (3);
    \draw (4) edge[bend right] (3);
    \draw (3) edge[bend right] (4);
    \draw (1) -- (4);
    \draw (7) -- (4); 
    \end{scope}
    \draw[draw=USred, semithick, fill=white] (6) circle [radius=1.6mm];
    \node[circ, fill=USred, draw=USred] (6bis) at (6) {};
\end{tikzpicture}
}}
\qquad
\vcenter{\hbox{
\begin{tikzpicture}[
    circ/.style={circle, draw, semithick, minimum size=1.5mm, inner sep=0pt, fill=black},
    arr/.style={-stealth}
]
    \newcommand{\len}{0.8}
    \begin{scope}    
    \node[circ,  label={right:$\scriptsize 6$}, label={above:$\textcolor{USred}{v}$}] (c1) at (0, 0) {};
    \node[circ, draw=darkorange, fill=darkorange, label={right:$\scriptsize 5$}, label={below:$\textcolor{darkorange}{w}$}] (c2) at ($(c1) + (-135:\len)$) {};
    \node[circ,  label={right:$\scriptsize 3$}] (c3) at ($(c2) + (-45:\len)$) {};
    \node[circ, label={right:$\scriptsize 4$}] (c4) at ($(c3) + (45:\len)$) {};
    \end{scope}

    \node[circ, label={right:$\scriptsize 7$},] (15) at ($(c4) + (45:\len)$) {};
    \node[circ, label={above right:$\scriptsize 1$},] (11) at ($(c4) + (-45:\len)$) {};
    
    \node[circ, label={above:$\scriptsize 2$},] (10) at ($(c2) + (180:\len)$) {};

    \begin{scope}
    \draw[arr] (c1) -- (c2);
    \draw[arr] (c3) -- (c4);
    \end{scope}
    \draw[arr] (c4) -- ($(c1)+(-45:1.6mm)$);
    \draw[arr] (c2) -- ($(c3)+(135:1.6mm)$);

    \draw[arr] (10) -- (c2);
    \draw[arr] (11) -- (c4);
    \draw[arr] (15) -- (c4);
    
    \draw[draw=USred, semithick, fill=white] (c1) circle [radius=1.6mm];
    \node[circ, fill=USred, draw=USred] (c1bis) at (c1) {};
    
    \draw[draw=darkblue, semithick, fill=white] (c3) circle [radius=1.6mm];
    \node[circ, fill=darkblue, draw=darkblue, label={below:$\textcolor{darkblue}{x}$}] (c1bis) at (c3) {};
    
\end{tikzpicture}
}}
\qquad
\vcenter{\hbox{
\begin{tikzpicture}[
    circ/.style={circle, draw,  minimum size=1.5mm, inner sep=0pt, fill=black},
    arr/.style={-stealth}
]
    \def\len{0.8}
    \node[circ, draw=darkorange, fill=darkorange, label={right:$\scriptsize 5$}, label={above:$\textcolor{darkorange}{w}$}] (5) at (0, 0) {};
    \node[circ, label={right:$\scriptsize 2$},] (2) at ($(5) + (-60:\len)$) {};
    \node[circ, label={right:$\scriptsize 6$},] (6) at ($(5) + (-120:\len)$) {};
    \node[circ, label={right:$\scriptsize 4$},] (4) at ($(6) + (-90:\len)$) {};
    
    \node[circ, label={below:$\scriptsize 1$},] (1) at ($(4) + (-130:\len)$) {};
    \node[circ, label={below:$\scriptsize 3$},] (3) at ($(4) + (-90:\len)$) {};
    \node[circ, label={below:$\scriptsize 7$},] (7) at ($(4) + (-50:\len)$) {};

    \draw (1) -- (4);
    \draw (3) -- (4);
    \draw (7) -- (4);
    \draw (4) -- (6);
    \draw (6) -- (5);
    \draw (2) -- (5);
    
    \draw[draw=darkblue, semithick, fill=white] (3) circle [radius=1.6mm];
    \node[circ, fill=darkblue, draw=darkblue, label={left:$\textcolor{darkblue}{x}$}] (3bis) at (3) {};
    
    \draw[draw=USred, semithick, fill=white] (6) circle [radius=1.6mm];
    \node[circ, fill=USred, draw=USred,label={left:$\textcolor{USred}{v}$}] (6bis) at (6) {};
\end{tikzpicture}}}$
    \caption{The bijection $[n]\times R(n)\to C(n)$ from Theorem~\ref{thm:catalysts}. The left-most tree is rooted at $3$, the selected vertex is $v=6$ (crimson), and the selected nonroot record is $r=4$ (green). The second graph is obtained via Theorem~\ref{thm:connected endofunctions}. Finally, $w=\pi^{-1}(x)$ (yellow) is the preimage of $x$ on the unique cycle of the third graph and the root of the tree on the right.}
    \label{fig:first_sequence_bijection}
\end{figure}

\subsection*{Tree record generating functions}

 Fix a nonnegative integer $k$. 
Let $\mathcal{R}_k(z)$ be the generating function for rooted trees with precisely $k$ records, and let $\mathcal{R}_{\ge k}(z)$ be the generating function for rooted trees at least $k$ records:
\begin{align*}
\mathcal{R}_k(z) = \sum_{n\ge 0} \planted(n,k) \ z^n,
&&\mathcal{R}_{\ge k}(z) = \sum_{l\ge k}\mathcal{R}_l(z).
\end{align*}

\begin{cor} The generating function for trees with at least $k$ records is expressed in terms of the Cayley tree function as
\[
\mathcal{R}_{\ge k}(z)  = \frac{\mathcal{T}^k(z)}{k}.
\]

\end{cor}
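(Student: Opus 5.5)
By Corollary~\ref{pablos_lemma}, the number of rooted trees on $[n]$ with at least $k$ records equals the number of connected endofunctions on $[n]$ of girth $k$. It therefore suffices to show that the exponential generating function of connected endofunctions of girth exactly $k$ is $\mathcal{T}^k(z)/k$. Here $\mathcal{T}(z)=\sum_{n\ge1} n^{n-1}z^n/n!$ is Cayley's tree function, i.e.\ the EGF of rooted labeled trees. The series $\mathcal{R}_k(z)$ should accordingly be read as exponential, with the $z^n/n!$ normalization implicit throughout, as is standard for labeled structures.

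I will use the symbolic method for labeled structures. A connected endofunction of girth $k$ is the same thing as a cycle of length $k$ whose nodes are rooted labeled trees: the cyclic vertices are the roots, and each noncyclic vertex hangs in the tree of the first cyclic vertex its orbit reaches. This decomposition is a bijection between endofunctions of girth $k$ and cyclic arrangements of $k$ rooted trees whose label sets partition $[n]$.

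The labeled-product rule gives EGF $\mathcal{T}^k(z)$ for sequences of $k$ rooted trees. Each cycle corresponds to exactly $k$ sequences, one for each rotation, and these are distinct because the labels are disjoint. Hence the cycle construction $\mathrm{CYC}_k(\mathcal{T})$ has EGF $\mathcal{T}^k/k$. Combining this with Corollary~\ref{pablos_lemma} yields $\mathcal{R}_{\ge k}(z)=\mathcal{T}^k(z)/k$.

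As a coefficient-level cross-check, the count $n(n-1)\cdots(n-k+1)\,n^{n-k-1}$ from the proof of Corollary~\ref{prop:R_planted} agrees with Lagrange inversion, which gives
\[
[z^n]\,\mathcal{T}^k = \frac{k}{n}\,\frac{n^{n-k}}{(n-k)!}.
\]

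The step that needs the most care is the normalization. The statement's $\mathcal{R}_k(z)=\sum_n \planted(n,k)z^n$ must be interpreted as exponential for the identity to hold. The case $k=0$ also has to be handled separately (or excluded), since $\mathcal{T}^0/0$ is undefined. Beyond that, the argument only combines Corollary~\ref{pablos_lemma} with the standard cycle construction.
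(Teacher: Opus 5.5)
Your proof is correct and follows the route the paper itself takes: Corollary~\ref{pablos_lemma} identifies $\mathcal{R}_{\ge k}$ with the generating function of connected endofunctions of girth $k$, i.e.\ cycles of $k$ rooted trees, and this cycle construction has exponential generating function $\mathcal{T}^k(z)/k$. Your caveats are also accurate: the series must be read as exponential, as the paper does in the proof of Lemma~\ref{lem:LemVirtNode}, and $k\ge 1$ is required.
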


In Proposition 3.8 of \cite{LRT-GF-records}, it was shown that the generating function for rooted trees with precisely $k$ records obeys the elegant equation
$
\mathcal{R}_k = \frac{\mathcal{T}^k(z)}{k} - \frac{\mathcal{T}^{k+1}(z)}{k+1}
$
Indeed, for  $k=1$, the equation  becomes the familiar expression for the generating function $\mathcal{U}(z)$ for unrooted trees:
$
\mathcal{U}(z) =  \mathcal{T}(z) - \frac{\mathcal{T}^{2}(z)}{2}
$
 (Observe that $\mathcal{U}(z)= \mathcal{R}_1(z)$ because
 trees rooted at their largest nodes are equinumerous with unrooted trees).  A beautiful proof of this equation is given by Pierre Leroux's dissymmetry lemma \cite{Leroux_dissym, Leroux_dissym_2, gessel2023goodhuntingsproblemcounting}.

\begin{lem}[A generalization of Pierre Leroux's dissimmetry lemma]
\label{cor:gf_krecords}
\begin{align}
\label{Eq:Leroux_generalization}
 \frac{\mathcal{T}^k(z)}{k} =
 \mathcal{R}_k + \frac{\mathcal{T}^{k+1}(z)}{k+1}. 
\end{align}
\end{lem}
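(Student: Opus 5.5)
The identity follows from the preceding corollary, $\mathcal{R}_{\ge k}(z)=\mathcal{T}^k(z)/k$. By definition $\mathcal{R}_{\ge k}=\mathcal{R}_k+\mathcal{R}_{\ge k+1}$, and substituting both values gives (\ref{Eq:Leroux_generalization}) immediately. So the real work is to justify $\mathcal{R}_{\ge k}=\mathcal{T}^k/k$ from earlier results, and I would present that argument in a way that explains the name ``dissymmetry lemma''.

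\textbf{Step 1.} Corollary~\ref{pablos_lemma} gives, for every $n$, that the number of rooted trees of order $n$ with at least $k$ records equals the number of connected endofunctions on $[n]$ of girth $k$. Hence $\mathcal{R}_{\ge k}$ is the exponential generating function of connected endofunctions of girth $k$.

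\textbf{Step 2.} A connected endofunction of girth $k$ decomposes as a cycle of length $k$ whose vertices each carry a rooted tree (the vertices flowing into that cycle point). By the symbolic method, the species of such structures is $\mathrm{CYC}_k(\mathcal{T})$. A cycle of length $k$ has $(k-1)!$ orderings, so its exponential generating function is $\mathcal{T}^k/k$. This is consistent with the count $n(n-1)\cdots(n-k+1)n^{n-k-1}$ used in the proof of Corollary~\ref{prop:R_planted}. Combining Steps 1 and 2 gives $\mathcal{R}_{\ge k}=\mathcal{T}^k/k$.

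\textbf{Step 3.} Subtracting the case $k+1$ from the case $k$ yields the lemma. For $k=1$ it specialises to $\mathcal{U}=\mathcal{T}-\mathcal{T}^2/2$, which is Leroux's dissymmetry identity.

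To give the lemma a combinatorial reading, I would phrase it as a bijection: a cycle of $k$ rooted trees is either a tree with exactly $k$ records, or corresponds (via the bijection of Theorem~\ref{Thm:endo_girth}) to a cycle of $k+1$ rooted trees. The main obstacle is making this refinement precise. The map $E_{k,1}\to$ trees with at least $k$ records comes from iterating Theorem~\ref{Thm:endo_girth}, so I must verify that the ``exactly $k$ records'' part corresponds to the complement of the image of $E_{k+1,1}$. At the generating-function level this is automatic from the set identity $\{\ge k\}=\{=k\}\sqcup\{\ge k+1\}$ together with Corollary~\ref{pablos_lemma}, so I would rely on that and present the bijective interpretation as a remark.
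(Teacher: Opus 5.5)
Your argument is correct and is essentially the paper's. The paper proves the lemma by composing the bijection $E_{k,1}\to E_{1,k}$ of Corollary~\ref{pablos_lemma} with the split of trees having at least $k$ records into those with exactly $k$ records and those with at least $k+1$ records, the latter being sent back to girth-$(k+1)$ endofunctions by the inverse bijection. That is the combinatorial form of your subtraction $\mathcal{R}_k=\mathcal{R}_{\ge k}-\mathcal{R}_{\ge k+1}$; your Step~2 additionally makes explicit the identification of girth-$k$ connected endofunctions with $\mathcal{T}^k/k$, which the paper leaves implicit. The ``obstacle'' you flag for the bijective reading is not one: both maps are bijections between full sets, so the image of $E_{k+1,1}$ is all of $E_{1,k+1}$, whose complement in $E_{1,k}$ is exactly the trees with exactly $k$ records, and this is precisely how the paper argues.
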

\begin{proof}
There is a bijection between the set of endofunctions on $[n]$ of girth $k$ and the union of the set of trees of order $n$ with $k$ records and the set of endofunctions on $[n]$ of girth $k+1$.
Given an endofunction of girth $k$, the bijection of Corollary \ref{pablos_lemma} gives a tree $T$ with at least $k$ records. Either $T$ is in the set of trees of order $n$ with $k$ records, or it has at least $k+1$ records. In the latter case, the inverse of the same bijection gives an endofunction of girth $k+1$.

\end{proof}

\subsection*{Differential equations and the record generating function}

We provide a combinatorial interpretation for the derivative  $\mathcal{R}'_k(z)$ of the series 
$\mathcal{R}_k(z)$. This series satisfies
$
\mathcal{R}'_k(z)=\frac{\mathcal{T}^k(z)}{z}.
$
Indeed, $\mathcal{R}'_1(z) = \frac{\mathcal{T}(z)}{z}$, as can be  checked directly by taking the derivative with respect to $z$ of $\mathcal{R}_1(z)$, the generating function for unrooted trees, and the expression of $\mathcal{R}_k(z)$ follows directly from Lemma \ref{cor:gf_krecords} by differentiating with respect to $z$:
$
\mathcal{R}'_k(z)  = \mathcal{T}(z) \mathcal{R}_{k-1}'(z). 
$

Our aim is thus to give a combinatorial proof this equation. Let us consider the class of rooted Cayley trees $\mathcal{R}^*_k$ with $k$ records, one of which is a marked virtual node, i.e. a node with no label that does not contribute to the size, and such that the subtree of the virtual node does not contain any record or, in other terms, the path going from each record to the root does not cross the virtual node (except for the record corresponding to the virtual node), then we have the following lemma.
\begin{lem}
The series $\mathcal{R}'_k(z)$ is the generating function of the class $\mathcal{R}^*_k$.
\label{lem:LemVirtNode}
\end{lem}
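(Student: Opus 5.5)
The plan is the standard pointing argument for exponential generating functions. Differentiating $\mathcal{R}_k(z)=\sum_n \planted(n,k)z^n/n!$ gives $\mathcal{R}'_k(z)=\sum_n \planted(n+1,k)z^n/n!$. The coefficient therefore counts rooted trees of order $n+1$ with $k$ records in which the label $n+1$ has been erased, leaving a node with no label that does not contribute to the size. The remaining labels $1,\dots,n$ are kept as they are. So I will set up a size-preserving bijection between these objects and $\mathcal{R}^*_k$ of order $n$.

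First I examine the role of the maximal label $n+1$ in a tree $T$ counted by $\planted(n+1,k)$. Since $n+1$ is the largest label, it is automatically a record. Moreover, every node in the subtree of $n+1$ other than $n+1$ itself has $n+1$ on its path to the root, so none of them is a record. Now replace $n+1$ by a virtual node that behaves as the largest element, i.e. as $+\infty$ in comparisons. The resulting object has exactly $k$ records, one of which is the marked virtual node. Its subtree contains no records other than itself, so no record's path to the root crosses the virtual node. This is exactly the defining property of $\mathcal{R}^*_k$. Relabelling is not needed, since the labels $1,\dots,n$ are untouched.

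Conversely, given an element of $\mathcal{R}^*_k$ of order $n$, I give the virtual node the label $n+1$. I then check that the record set is preserved. The new node $n+1$ is a record because it is maximal. A labelled node outside the subtree of the virtual node has a path to the root avoiding $n+1$, so its record status is unchanged. A labelled node inside that subtree was already a non-record by hypothesis, and it remains one because $n+1$ lies above it. Hence the tree has exactly $k$ records. The two maps are clearly mutually inverse, which gives $[z^n/n!]\mathcal{R}'_k=|\mathcal{R}^*_k(n)|$.

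The main obstacle is interpretive rather than technical. I must pin down how records are counted in $\mathcal{R}^*_k$: the virtual node counts as a record and is treated as exceeding every label. I must also confirm that the condition ``the subtree of the virtual node contains no record'' is automatic for the image of the map, and not an extra restriction that loses trees. The argument above shows it is automatic, since everything below the maximum is a non-record. Finally, I will check the edge case where the virtual node is the root. There all other nodes are non-records, so this case contributes only when $k=1$, which matches $\mathcal{R}'_1=\mathcal{T}/z$ (rooted trees of order $n+1$ rooted at $n+1$).
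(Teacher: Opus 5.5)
Your proposal is correct and is essentially the paper's proof. The paper likewise gives the label $n$ to the virtual node of an order-$(n-1)$ element of $\mathcal{R}^*_k$, and conversely erases the label $n$ from a tree counted by $\planted(n,k)$; it notes that the record count is unchanged and concludes with the same shift $\sum_{T'} z^{|T'|-1}/(|T'|-1)! = \mathcal{R}'_k(z)$. Your explicit convention that the virtual node compares as the maximum, which makes the ``no records below the virtual node'' condition automatic, is a useful clarification of a point the paper leaves implicit.
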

\begin{figure}[h!]
    \begin{center}
        \resizebox{0.75\textwidth}{!}{$%
        \vcenter{\hbox{\usetikzlibrary{positioning}
\tikzset{
  circ/.style={
    circle, draw, semithick, minimum size=1.5mm, inner sep=0pt, fill=black,
    }
}

\begin{tikzpicture}[
    arr/.style={semithick}
]
    \newcommand{\len}{1}

    \node[circ, label={right:$14$}] (s1) at (0,0) {};
    \node[circ, label={right:$2$}] (s11) at ($(s1) + (-135:\len)$) {};
    \node[circ, label={right:$5$}] (s12) at ($(s1) + (-90:\len)$) {};
    \node[circ, label={right:$6$}] (s13) at ($(s1) + (-45:\len)$) {};
    \node[circ, label={right:$8$}] (s111) at ($(s11) + (-135:\len)$) {};
    \node[circ, label={right:$11$}] (s112) at ($(s11) + (-90:\len)$) {};
    \node[circ, label={right:$13$}] (s113) at ($(s11) + (-40:\len)$) {};
    \node[circ, label={right:$7$}] (s1121) at ($(s112) + (-135:\len)$) {};
    \node[circ, label={right:$18$}] (s1122) at ($(s112) + (-45:\len)$) {};
    \node[circ, label={right:$16$}] (s11221) at ($(s1122) + (-135:\len)$) {};
    \node[circ, label={right:$9$}] (s1131) at ($(s113) + (-45:\len)$) {};
    \node[circ, label={right:$10$}] (s11311) at ($(s1131) + (-115:\len)$) {};
    \node[circ, label={right:$19$}] (s11312) at ($(s1131) + (-45:\len)$) {};
     \node[circ, label={right:$15$}] (s113121) at ($(s11312) + (-135:\len)$) {};
     \node[circ, label={right:$17$}] (s113122) at ($(s11312) + (-90:\len)$) {};
     \node[circ, label={right:$1$}] (s1131221) at ($(s113122) + (-150:\len)$) {};
     \node[circ, label={right:$3$}] (s1131222) at ($(s113122) + (-120:\len)$) {};
      \node[circ, label={right:$4$}] (s1131223) at ($(s113122) + (-60:\len)$) {};
      \node[circ, label={right:$12$}] (s1131224) at ($(s113122) + (-30:\len)$) {};

    \draw[arr] (s1) -- (s11);
    \draw[arr] (s1) -- (s12);
    \draw[arr] (s1) -- (s13);
    \draw[arr] (s11) -- (s111);
    \draw[arr] (s11) -- (s112);
    \draw[arr] (s11) -- (s113);
    \draw[arr] (s112) -- (s1121);
    \draw[arr] (s112) -- (s1122);
    \draw[arr] (s1122) -- (s11221);
    \draw[arr] (s113) -- (s1131);
    \draw[arr] (s1131) -- (s11311);
    \draw[arr] (s1131) -- (s11312);
    \draw[arr] (s11312) -- (s113121);
    \draw[arr] (s11312) -- (s113122);
    \draw[arr] (s113122) -- (s1131221);
    \draw[arr] (s113122) -- (s1131222);
    \draw[arr] (s113122) -- (s1131223);
    \draw[arr] (s113122) -- (s1131224);

\end{tikzpicture}}}%
        \hspace{1cm}%
        \longleftrightarrow%
        \hspace{1cm}%
        \vcenter{\hbox{\usetikzlibrary{positioning}
\tikzset{
  circ/.style={
    circle, draw, semithick, minimum size=1.5mm, inner sep=0pt, fill=black,
    }
}

\begin{tikzpicture}[
    arr/.style={semithick}
]
    \newcommand{\len}{1}

    \node[circ, label={right:$14$}] (s1) at (0,0) {};
    \node[circ, label={right:$2$}] (s11) at ($(s1) + (-135:\len)$) {};
    \node[circ, label={right:$5$}] (s12) at ($(s1) + (-90:\len)$) {};
    \node[circ, label={right:$6$}] (s13) at ($(s1) + (-45:\len)$) {};
    \node[circ, label={right:$8$}] (s111) at ($(s11) + (-135:\len)$) {};
    \node[circ, label={right:$11$}] (s112) at ($(s11) + (-90:\len)$) {};
    \node[circ, label={right:$13$}] (s113) at ($(s11) + (-40:\len)$) {};
    \node[circ, label={right:$7$}] (s1121) at ($(s112) + (-135:\len)$) {};
    \node[circ, label={right:$18$}] (s1122) at ($(s112) + (-45:\len)$) {};
    \node[circ, label={right:$16$}] (s11221) at ($(s1122) + (-135:\len)$) {};
    \node[circ, label={right:$9$}] (s1131) at ($(s113) + (-45:\len)$) {};
    \node[circ, label={right:$10$}] (s11311) at ($(s1131) + (-115:\len)$) {};
    \node[circ, fill=white, draw=black] (s11312) at ($(s1131) + (-45:\len)$) {};
     \node[circ, label={right:$15$}] (s113121) at ($(s11312) + (-135:\len)$) {};
     \node[circ, label={right:$17$}] (s113122) at ($(s11312) + (-90:\len)$) {};
     \node[circ, label={right:$1$}] (s1131221) at ($(s113122) + (-150:\len)$) {};
     \node[circ, label={right:$3$}] (s1131222) at ($(s113122) + (-120:\len)$) {};
      \node[circ, label={right:$4$}] (s1131223) at ($(s113122) + (-60:\len)$) {};
      \node[circ, label={right:$12$}] (s1131224) at ($(s113122) + (-30:\len)$) {};

    \draw[arr] (s1) -- (s11);
    \draw[arr] (s1) -- (s12);
    \draw[arr] (s1) -- (s13);
    \draw[arr] (s11) -- (s111);
    \draw[arr] (s11) -- (s112);
    \draw[arr] (s11) -- (s113);
    \draw[arr] (s112) -- (s1121);
    \draw[arr] (s112) -- (s1122);
    \draw[arr] (s1122) -- (s11221);
    \draw[arr] (s113) -- (s1131);
    \draw[arr] (s1131) -- (s11311);
    \draw[arr] (s1131) -- (s11312);
    \draw[arr] (s11312) -- (s113121);
    \draw[arr] (s11312) -- (s113122);
    \draw[arr] (s113122) -- (s1131221);
    \draw[arr] (s113122) -- (s1131222);
    \draw[arr] (s113122) -- (s1131223);
    \draw[arr] (s113122) -- (s1131224);

\end{tikzpicture}}}
        $} 
    \end{center}
  \caption{An example of the bijection between Cayley trees with $n$ labels and Cayley trees with $n-1$ labels and a virtual node. The virtual node in white is obtained by blanking out the node with largest label, $19$.}
    \label{fig:BijCayley-virtual}
\end{figure}
\begin{proof}There is a simple bijection between the trees of size $n-1$ in $\mathcal{R}^*_k$ and the trees of size $n$ in $\mathcal{R}_k$: this bijection  consists in giving the label $n$ to the virtual node of a tree of size $n-1$ from $\mathcal{R}^*_k$  and, conversely, in removing the label $n$ of a tree of size $n$ of $\mathcal{R}_k$ (see Figure~\ref{fig:BijCayley-virtual}). Since the virtual node is considered as a record and since it has no records in its sub-tree, the number of records does not change.
Then, 
$
\sum_{T \in \mathcal{R}^*_k} \frac{z^{|T|}}{|T|!}=\sum_{T' \in \mathcal{R}_k} \frac{z^{|T'|-1}}{(|T'|-1)!}=\mathcal{R}'_k(z).
$
\end{proof}

Let us now define an application $\phi$ mapping each element of $\mathcal{R}^*_k$ to a pair consisting after standardization in a Cayley tree and an element of $\mathcal{R}^*_{k-1}$:
Given a tree $T$ belonging to $\mathcal{R}^*_k$,  we mark the parent of the virtual node and cut the tree into two sub-trees at the level of the virtual node. 
We denote by  $S'$  the sub-tree of $T$ whose root is the virtual node, and  by  $S$ the sub-tree obtained from $T$ by stripping the sub-tree $S'$ and marking the node where it was attached (see Figure~\ref{fig:DecCayleyVirtual}, left-hand side). Then we transfer the label $\ell$ of the marked node of $S$ to the root of $S'$ to form a labeled tree $\bar S'$. Finally we shift all the labels $\ell_1<\ldots<\ell_p$ of $S$ that are greater than $\ell$ and remove the mark to get a tree $\bar S$: the marked node of $S$ gets label $\ell_i$ in $\bar S$, the node with label $\ell_i$  gets label $\ell_{i+1}$, and the node with maximal label $\ell_p$ in $S$ becomes a virtual (unlabeled) node in $\bar S$. 
\begin{p}
       The mapping $\phi$ that sends $T$ to $(\bar S,\bar S')$ is a bijection between
 trees of $\mathcal{R}^*_k$ of size $n$, and pairs of trees $(\bar S,\bar S')$ with total label set $\{1,\ldots,n\}$ such that the standardization of $\bar S$ belongs to $\mathcal{R}^*_k$ and the standardization of $\bar S'$ is a Cayley tree.

\end{p}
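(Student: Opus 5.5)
The plan is to check four things in turn: that $\phi$ is well defined, that it preserves the relevant records, that it can be inverted explicitly, and that the sizes and label sets match. Throughout I read the shifting rule as giving the marked node of $S$ the label $\ell_1$, the smallest label of $S$ exceeding $\ell$, with every other label $\ell_i$ moving up to $\ell_{i+1}$. This is a cyclic relabelling of the set $\{\ell,\ell_1,\ldots,\ell_p\}$ along the chain $\ell\to\ell_1\to\cdots\to\ell_p\to$ virtual.

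\textbf{Step 1: records survive the cut.} Let $T\in\mathcal{R}^*_k$. By definition of $\mathcal{R}^*_k$, the subtree $S'$ hanging from the virtual node contains no record other than the virtual node itself. So the other $k-1$ records of $T$ all lie in $S$. Moreover, their paths to the root avoid $S'$, so their record status is the same in $S$ as in $T$.

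\textbf{Step 2: the relabelling $S\mapsto \bar S$ is order preserving.} For any two nodes of $S$ that are not both in $\{\text{marked node},\ \text{node labelled } \ell_p\}$, the relative order of their labels is unchanged.
\begin{itemize}
\item Labels below $\ell$ are fixed.
\item Labels above $\ell$ are each raised by one step in the chain.
\item The marked node moves from $\ell$ to $\ell_1$. This is still larger than every label below $\ell$ and smaller than every new label $\ell_2,\ldots,\ell_p$.
\end{itemize}
So every comparison along a path to the root is preserved. The only change is that the global maximum $\ell_p$ of $S$ becomes virtual. That maximum was automatically a record, and every node in its subtree has $\ell_p$ as an ancestor, so none of them is a record. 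Hence $\bar S$ satisfies the defining condition of a tree with a virtual node. It has the same records as $S$, namely the $k-1$ records of Step 1 (the index shift matching $\mathcal{R}^*_{k-1}$, as in the intended identity $\mathcal{R}'_k=\mathcal{T}\mathcal{R}'_{k-1}$). The degenerate case $p=0$, where $\ell$ is already the maximum of $S$, is handled the same way: the marked node itself becomes the virtual node.

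\textbf{Step 3: $\bar S'$ and the size count.} Replacing the virtual root of $S'$ by the label $\ell$ gives an arbitrary labelled rooted tree, since no record condition survives. Hence its standardization is a Cayley tree. Label bookkeeping: $T$ uses $n$ labels and one virtual node. The pair $(\bar S,\bar S')$ uses the same $n$ labels, with one virtual node in $\bar S$.

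\textbf{Step 4: the inverse, which I expect to be the main obstacle.} The delicate point is that the inverse must be forced by the output alone.
\begin{itemize}
\item Given $(\bar S,\bar S')$, read $\ell$ as the root label of $\bar S'$.
\item The labels of $\bar S$ greater than $\ell$, together with the virtual node, determine the chain $\ell_1<\cdots<\ell_p$ (placing the virtual node at the top).
\item The marked node is the one carrying $\ell_1$, or the virtual node if no label exceeds $\ell$.
\item Undo the cyclic shift, so the marked node regains $\ell$ and the virtual node regains $\ell_p$.
\item Strip $\ell$ from the root of $\bar S'$, making it virtual, and attach it as a child of the marked node.
\end{itemize}
By Step 2 run backwards, the record structure is restored. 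Since every node of $\bar S'$ lies below the virtual node, the subtree condition of $\mathcal{R}^*_k$ holds.

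Finally, I would check that the two compositions are the identity. This amounts to verifying that the chain read off in the inverse coincides with the chain used in $\phi$, which holds because the cyclic relabelling of Step 2 preserves the set $\{\ell,\ell_1,\ldots,\ell_p\}$.
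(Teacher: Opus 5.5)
Your proposal is correct and follows essentially the same route as the paper's proof. You show that cutting at the virtual node leaves exactly $k-1$ records in $S$, that the cyclic relabelling $\ell\mapsto\ell_1\mapsto\cdots\mapsto\ell_p\mapsto$ virtual is order preserving (with the virtual node taking the place of the maximum), so records are preserved, and that the inverse reads $\ell$ off the root of $\bar S'$, undoes the shift on the labels of $\bar S$ above $\ell$, and regrafts. You are also right that the target class should be $\mathcal{R}^*_{k-1}$, as the paper's own proof states. Like the paper, you implicitly need $k\ge 2$: for $k=1$ the virtual node is the root and has no parent to mark.
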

\begin{proof}
Observe that the tree $S$ has $k-1$ records since by definition of the class  $\mathcal{R}^*_k$ there are no records in the sub-tree starting at the virtual node, except for the virtual node itself, and the shift of labels preserves the records: in particular the virtual node of $\bar S$ remains a record since it takes the place of the maximum of $S$.
Clearly $S$ and $S'$ can be recovered from $\bar S$ and $\bar S'$ upon removing label $\ell$ at the root of $\bar S'$ and inserting it at the first larger label $\ell_1>\ell$ in $\bar S$ and shifting back the larger labels $\ell_1<\ldots<\ell_p$ until $\ell_p$ is put back on the virtual node of $\bar S$. Then $S$ and $S'$ are glued together to recover $T$. This proves that the mapping $\phi$ is injective. 

Conversely, given a pair of trees $(\bar S,\bar S')$ as in the proposition the inverse construction yields a tree $T$ with a virtual node and $k$ records that are not in the sub-tree of the virtual node, that is a tree of $\mathcal{R}^*_k$, and $\phi(T)=(\bar S,\bar S')$.
\end{proof}

 \begin{figure}[h!]
    \centering
\resizebox{1\textwidth}{!}{$%
  \vcenter{\hbox{\usetikzlibrary{positioning}
\tikzset{
  circ/.style={
    circle, draw, semithick, minimum size=1.5mm, inner sep=0pt, fill=black,
    }
}

\begin{tikzpicture}[
    arr/.style={semithick}
]
    \newcommand{\len}{1}

    \node[circ, label={right:$14$}] (s1) at (0,0) {};
    \node[circ, label={right:$2$}] (s11) at ($(s1) + (-135:\len)$) {};
    \node[circ, label={right:$5$}] (s12) at ($(s1) + (-90:\len)$) {};
    \node[circ, label={right:$6$}] (s13) at ($(s1) + (-45:\len)$) {};
    \node[circ, label={right:$8$}] (s111) at ($(s11) + (-135:\len)$) {};
    \node[circ, label={right:$11$}] (s112) at ($(s11) + (-90:\len)$) {};
    \node[circ, label={right:$13$}] (s113) at ($(s11) + (-40:\len)$) {};
    \node[circ, label={right:$7$}] (s1121) at ($(s112) + (-135:\len)$) {};
    \node[circ, label={right:$18$}] (s1122) at ($(s112) + (-45:\len)$) {};
    \node[circ, label={right:$16$}] (s11221) at ($(s1122) + (-135:\len)$) {};
    \node[circ, label={right:$9$}] (s1131) at ($(s113) + (-45:\len)$) {};
    \node[circ, label={right:$10$}] (s11311) at ($(s1131) + (-115:\len)$) {};
    \node[circ, fill=white, draw=black] (s11312) at ($(s1131) + (-45:\len)$) {};
     \node[circ, label={right:$15$}] (s113121) at ($(s11312) + (-135:\len)$) {};
     \node[circ, label={right:$17$}] (s113122) at ($(s11312) + (-90:\len)$) {};
     \node[circ, label={right:$1$}] (s1131221) at ($(s113122) + (-150:\len)$) {};
     \node[circ, label={right:$3$}] (s1131222) at ($(s113122) + (-120:\len)$) {};
      \node[circ, label={right:$4$}] (s1131223) at ($(s113122) + (-60:\len)$) {};
      \node[circ, label={right:$12$}] (s1131224) at ($(s113122) + (-30:\len)$) {};

    \draw[arr] (s1) -- (s11);
    \draw[arr] (s1) -- (s12);
    \draw[arr] (s1) -- (s13);
    \draw[arr] (s11) -- (s111);
    \draw[arr] (s11) -- (s112);
    \draw[arr] (s11) -- (s113);
    \draw[arr] (s112) -- (s1121);
    \draw[arr] (s112) -- (s1122);
    \draw[arr] (s1122) -- (s11221);
    \draw[arr] (s113) -- (s1131);
    \draw[arr] (s1131) -- (s11311);
    \draw[arr] (s1131) -- (s11312);
    \draw[arr] (s11312) -- (s113121);
    \draw[arr] (s11312) -- (s113122);
    \draw[arr] (s113122) -- (s1131221);
    \draw[arr] (s113122) -- (s1131222);
    \draw[arr] (s113122) -- (s1131223);
    \draw[arr] (s113122) -- (s1131224);

\end{tikzpicture}}}%
  \hspace{1cm}%
  \longleftrightarrow%
  \hspace{1cm}%
  \vcenter{\hbox{\usetikzlibrary{positioning}
\tikzset{
  circ/.style={
    circle, draw, semithick, minimum size=1.5mm, inner sep=0pt, fill=black,
    }
}

\begin{tikzpicture}[
    arr/.style={semithick}
]
    \newcommand{\len}{1}

    \node[circ, label={right:$14$}] (s1) at (0,0) {};
    \node[circ, label={right:$2$}] (s11) at ($(s1) + (-135:\len)$) {};
    \node[circ, label={right:$5$}] (s12) at ($(s1) + (-90:\len)$) {};
    \node[circ, label={right:$6$}] (s13) at ($(s1) + (-45:\len)$) {};
    \node[circ, label={right:$8$}] (s111) at ($(s11) + (-135:\len)$) {};
    \node[circ, label={right:$11$}] (s112) at ($(s11) + (-90:\len)$) {};
    \node[circ, label={right:$13$}] (s113) at ($(s11) + (-40:\len)$) {};
    \node[circ, label={right:$7$}] (s1121) at ($(s112) + (-135:\len)$) {};
    \node[circ, label={right:$18$}] (s1122) at ($(s112) + (-45:\len)$) {};
    \node[circ, label={right:$16$}] (s11221) at ($(s1122) + (-135:\len)$) {};
    \node[circ, fill=white, label={right:$9$}] (s1131) at ($(s113) + (-45:\len)$) {};
    \node[circ, label={right:$10$}] (s11311) at ($(s1131) + (-115:\len)$) {};
    \node[circ, fill=white, draw=blue] (s11312) at ($(s1131) + (-45:\len)$) {};
     \node[circ, fill=blue, draw=blue, label={right:$\textcolor{blue}{15}$}] (s113121) at ($(s11312) + (-135:\len)$) {};
     \node[circ, fill=blue, draw=blue,label={right:$\textcolor{blue}{17}$}] (s113122) at ($(s11312) + (-90:\len)$) {};
     \node[circ, fill=blue, draw=blue,label={right:$\textcolor{blue}{1}$}] (s1131221) at ($(s113122) + (-150:\len)$) {};
     \node[circ, fill=blue, draw=blue,label={right:$\textcolor{blue}{3}$}] (s1131222) at ($(s113122) + (-120:\len)$) {};
      \node[circ, fill=blue, draw=blue,label={right:$\textcolor{blue}{4}$}] (s1131223) at ($(s113122) + (-60:\len)$) {};
      \node[circ, fill=blue, draw=blue,label={right:$\textcolor{blue}{12}$}] (s1131224) at ($(s113122) + (-30:\len)$) {};

    \draw[arr] (s1) -- (s11);
    \draw[arr] (s1) -- (s12);
    \draw[arr] (s1) -- (s13);
    \draw[arr] (s11) -- (s111);
    \draw[arr] (s11) -- (s112);
    \draw[arr] (s11) -- (s113);
    \draw[arr] (s112) -- (s1121);
    \draw[arr] (s112) -- (s1122);
    \draw[arr] (s1122) -- (s11221);
    \draw[arr] (s113) -- (s1131);
    \draw[arr] (s1131) -- (s11311);
    
    \draw[arr, draw=blue] (s11312) -- (s113121);
    \draw[arr, draw=blue] (s11312) -- (s113122);
    \draw[arr, draw=blue] (s113122) -- (s1131221);
    \draw[arr, draw=blue] (s113122) -- (s1131222);
    \draw[arr, draw=blue] (s113122) -- (s1131223);
    \draw[arr, draw=blue] (s113122) -- (s1131224);

\end{tikzpicture}}}%
   \hspace{1cm}%
  \longleftrightarrow%
  \hspace{1cm}%
  \vcenter{\hbox{\usetikzlibrary{positioning}
\tikzset{
  circ/.style={
    circle, draw, semithick, minimum size=1.5mm, inner sep=0pt, fill=black,
    }
}

\begin{tikzpicture}[
    arr/.style={semithick}
]
    \newcommand{\len}{1}

    \node[circ, label={right:$16$}] (s1) at (0,0) {};
    \node[circ, label={right:$2$}] (s11) at ($(s1) + (-135:\len)$) {};
    \node[circ, label={right:$5$}] (s12) at ($(s1) + (-90:\len)$) {};
    \node[circ, label={right:$6$}] (s13) at ($(s1) + (-45:\len)$) {};
    \node[circ, label={right:$8$}] (s111) at ($(s11) + (-135:\len)$) {};
    \node[circ, label={right:$13$}] (s112) at ($(s11) + (-90:\len)$) {};
    \node[circ, label={right:$14$}] (s113) at ($(s11) + (-40:\len)$) {};
    \node[circ, label={right:$7$}] (s1121) at ($(s112) + (-135:\len)$) {};
    \node[circ, fill=white, draw=black] (s1122) at ($(s112) + (-45:\len)$) {};
    \node[circ, label={right:$18$}] (s11221) at ($(s1122) + (-135:\len)$) {};
    \node[circ, label={right:$10$}] (s1131) at ($(s113) + (-45:\len)$) {};
    \node[circ, label={right:$11$}] (s11311) at ($(s1131) + (-115:\len)$) {};
    \node[circ, fill=blue, draw=blue, label={right:\textcolor{blue}{$9$}}] (s11312) at ($(s1131) + (-45:\len)$) {};
     \node[circ, fill=blue, draw=blue, label={right:\textcolor{blue}{$15$}}] (s113121) at ($(s11312) + (-135:\len)$) {};
     \node[circ, fill=blue, draw=blue,, label={right:\textcolor{blue}{$17$}}] (s113122) at ($(s11312) + (-90:\len)$) {};
     \node[circ, fill=blue, draw=blue,, label={right:\textcolor{blue}{$1$}}] (s1131221) at ($(s113122) + (-150:\len)$) {};
     \node[circ, fill=blue, draw=blue, label={right:\textcolor{blue}{$3$}}] (s1131222) at ($(s113122) + (-120:\len)$) {};
      \node[circ, fill=blue, draw=blue, label={right:\textcolor{blue}{$4$}}] (s1131223) at ($(s113122) + (-60:\len)$) {};
      \node[circ, fill=blue, draw=blue, label={right:\textcolor{blue}{$12$}}] (s1131224) at ($(s113122) + (-30:\len)$) {};

    \draw[arr] (s1) -- (s11);
    \draw[arr] (s1) -- (s12);
    \draw[arr] (s1) -- (s13);
    \draw[arr] (s11) -- (s111);
    \draw[arr] (s11) -- (s112);
    \draw[arr] (s11) -- (s113);
    \draw[arr] (s112) -- (s1121);
    \draw[arr] (s112) -- (s1122);
    \draw[arr] (s1122) -- (s11221);
    \draw[arr] (s113) -- (s1131);
    \draw[arr] (s1131) -- (s11311);
    \draw[arr, draw=blue] (s11312) -- (s113121);
    \draw[arr, draw=blue] (s11312) -- (s113122);
    \draw[arr, draw=blue] (s113122) -- (s1131221);
    \draw[arr, draw=blue] (s113122) -- (s1131222);
    \draw[arr, draw=blue] (s113122) -- (s1131223);
    \draw[arr, draw=blue] (s113122) -- (s1131224);

\end{tikzpicture}}}%
  $}
  \caption{The decomposition $\phi$ of a Cayley tree with a virtual node. The trees obtained are $\bar S$, in black, and $\bar S'$, in blue. In the center there are the intermediary trees $S$ and $S'$.}
  \label{fig:DecCayleyVirtual}
\end{figure}

\begin{cor}[Proposition 3.4 of \cite{LRT-GF-records}]
\label{le:trees_k_records}
The generating function for   trees with precisely $k$ records can be expressed in terms of the Cayley tree function as
\begin{align*}
\mathcal{R}_k = 
\int_0^z  \frac{1}{s}
  \mathcal{T}^k(s)  \ ds.
  \end{align*}
\end{cor}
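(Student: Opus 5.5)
The plan is to use the combinatorial decomposition just established. We show that $\mathcal{R}'_k(z)=\mathcal{T}(z)\,\mathcal{R}'_{k-1}(z)$ for $k\ge 2$, compute $\mathcal{R}'_1$ directly, and then integrate. The constant term of $\mathcal{R}_k$ vanishes, since there is no tree of order $0$ with $k\ge 1$ records. Hence $\mathcal{R}_k$ is recovered from its derivative by $\int_0^z$, and it suffices to prove
\[
\mathcal{R}'_k(z)=\frac{\mathcal{T}^k(z)}{z}.
\]

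First, by Lemma~\ref{lem:LemVirtNode}, $\mathcal{R}'_k$ is the exponential generating function of $\mathcal{R}^*_k$. Next, the preceding Proposition gives a size-preserving bijection $\phi$ between objects of size $n$ in $\mathcal{R}^*_k$ and pairs $(\bar S,\bar S')$ whose label sets partition $[n]$. In each pair, $\bar S'$ standardizes to a Cayley tree, and $\bar S$ standardizes to a tree with a virtual node that, by the record count in the proof ($S$ has $k-1$ records, counting the virtual node of $\bar S$), lies in $\mathcal{R}^*_{k-1}$. I would check this carefully, since the Proposition as stated says $\mathcal{R}^*_k$ for $\bar S$, which must be a typo for $\mathcal{R}^*_{k-1}$. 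I would also check that $\bar S$ still satisfies the defining condition: no records in the subtree of its new virtual node. This holds because that node carries the maximum label of $S$, so nothing below it can be a record.

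A labeled product of this kind translates into a product of exponential generating functions, giving $\mathcal{R}'_k=\mathcal{T}\cdot\mathcal{R}'_{k-1}$. For the base case, $\mathcal{R}^*_1$ consists of trees whose only record is a virtual node. That node must be the root, since the root is always a record. So $\mathcal{R}^*_1$ is a virtual root with a set of Cayley subtrees, whose generating function is $e^{\mathcal{T}}=\mathcal{T}/z$. This agrees with $\mathcal{R}'_1=\mathcal{T}/z$ from the differentiation of $\mathcal{U}$. Induction then gives $\mathcal{R}'_k=\mathcal{T}^{k-1}\cdot\mathcal{T}/z=\mathcal{T}^k/z$.

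The main obstacle is the bookkeeping for the size. An element of $\mathcal{R}^*_k$ of size $n$ has $n$ labels, not counting the virtual node. On the pair side, $\bar S'$ receives the label $\ell$ together with the labels of $S'$ below the virtual node, while $\bar S$ loses one label and gains a virtual node. I must confirm that the label counts add to $n$ exactly. This ensures the generating function is a product with no extra factor of $z$. I also need $\bar S'$ to be nonempty, which the formula $\mathcal{T}$ requires and which holds because $\bar S'$ always contains the node labeled $\ell$. Finally, the integrand $\mathcal{T}^k(s)/s$ is a genuine power series, since $\mathcal{T}(s)=s+O(s^2)$, so the integral is well defined termwise.
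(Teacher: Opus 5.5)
Your proposal is correct and follows essentially the paper's route. Lemma~\ref{lem:LemVirtNode} identifies $\mathcal{R}'_k$ with $\mathcal{R}^*_k$, the bijection $\phi$ gives $\mathcal{R}'_k=\mathcal{T}\,\mathcal{R}'_{k-1}$ (you are right that the Proposition should read $\mathcal{R}^*_{k-1}$ for $\bar S$), and you then integrate $\mathcal{T}^k/z$ with zero constant term. Two small differences: your combinatorial base case ($\mathcal{R}^*_1$ is a virtual root carrying a set of rooted trees, giving $e^{\mathcal{T}}=\mathcal{T}/z$) neatly replaces the paper's differentiation of $\mathcal{U}=\mathcal{T}-\mathcal{T}^2/2$, and the label count you flag does close up, since $\bar S'$ receives the $m$ labels of $S'$ plus $\ell$ while $\bar S$ keeps the other $n-m-1$.
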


\section*{Acknowledgements}

We are grateful to Ira Gessel encouraging remarks and valuable suggestions, to Cyril Bandelier for engaging discussions, and to  Andrea Sportiello for his insightful remarks.

This work has been partially supported by Grants PID2020-117843GB-
I00 and PID2024-157173NB-I00 funded by MCIN/AEI/10.13039/501100011033 and by FEDER, UE. Author ST is supported by \href{10.3030/101070558}{FoQaCiA} which is funded by the European Union and NSERC.

\bibliographystyle{eptcsalpha}
\bibliography{references}

\end{document}